\journal{oe}
\newtheorem{thm}{\bf Theorem}
\newtheorem{lem}{\bf Lemma}[section]
\newtheorem{rem}{ Remark}[section]
\begin{document}

\title{Advanced denoising for X-ray ptychography}

\author{Huibin Chang,\authormark{1,2,10} Pablo Enfedaque,\authormark{2} Jie Zhang, \authormark{1} Juliane Reinhardt, \authormark{3,4}  Bjoern Enders, \authormark{5, 6}  Young-Sang Yu, \authormark{5}  David Shapiro, \authormark{5}  Christian G. Schroer, \authormark{7,8}  Tieyong Zeng \authormark{9}, Stefano Marchesini \authormark{2,11}
 }


\address{\authormark{1} School of Mathematical Sciences, Tianjin Normal University, Tianjin, China\\
\authormark{2}Computational Research Division, Lawrence Berkeley National Laboratory, Berkeley, CA, USA\\
\authormark{3} ARC Centre of Excellence for Advanced Molecular Imaging, Department of Chemistry and Physics,
La Trobe Institute for Molecular Sciences, La Trobe University, Bundoora, Australia\\
\authormark{4} The Australian Synchrotron, Clayton, Australia\\
\authormark{5} Advanced Light Source, Lawrence Berkeley National Laboratory, Berkeley, CA, USA\\
\authormark{6} Physics Department, University of California at Berkeley, Berkeley, CA, USA\\
\authormark{7} Deutsches Elektronen-Synchrotron DESY, D-22607 Hamburg, Germany\\
\authormark{8} Department Physik, Universit\"at Hamburg, Luruper Chaussee 149, 22761 Hamburg, Germany\\
\authormark{9} Department of Mathematics, The Chinese University of Hong Kong, Hong Kong\\
\authormark{10} \authormark{*}changhuibin@gmail.com \\
\authormark{11} \authormark{*}smarchesini@lbl.gov \\
}


\begin{abstract}
The success of ptychographic imaging experiments strongly  depends on achieving high signal-to-noise ratio. This is particularly important in nanoscale imaging experiments when diffraction signals are very weak and the experiments are accompanied by significant parasitic scattering (background), outliers or correlated noise sources. It is also critical when rare events such as cosmic rays, or bad frames caused by electronic glitches or shutter timing malfunction take place. 
 In this paper, we  propose a novel iterative algorithm with rigorous analysis that exploits the direct forward model for parasitic noise and {sample smoothness } to achieve a thorough characterization and removal of structured and random noise. We present a formal description of the proposed algorithm and prove its convergence  under mild conditions. Numerical experiments from simulations and real data (both soft and hard X-ray beamlines) demonstrate that the proposed algorithms produce better results when compared to state-of-the-art methods.
\end{abstract}

\section{Introduction}\label{intro}
Ptychography \cite{nellist1995resolution,rodenburg2004phase,maiden2009improved,jiang2018electron}
has become an increasingly popular imaging technique, and it is used nowadays in scientific fields as diverse as condensed matter physics \cite{shi2016soft}, cell biology \cite{giewekemeyer2010quantitative}, materials science \cite{shapiro2014chemical,yu2018three}, and electronics \cite{holler2017high,jiang2018electron}, among other areas. Compared to standard lens-based microscopy, the resolution in ptychography is not limited by {the size of the illumination probe}, but by the wavelength and number of photons used in an experiment. Unfortunately, multiple experimental challenges have to be tackled to achieve a high-quality reconstruction from a ptychographic experiment in practice. Complex experimental systems require nanometer stability while collecting large amounts of diffraction data, which can require several hours depending on the experimental setup. Data can be then contaminated by structured parasitic scattering \cite{wiedorn2017post,reinhardt2017beamstop,wang2017background}, detector read-out noise, Poisson noise derived from the photon counting, and different types of outliers. Additionally, the characterization of the illumination source is also commonly considered as a joint problem to the object reconstruction, known as Blind Ptychography (BP) \cite{maiden2009improved,thibault2009probe,hesse2015proximal,chang2018Blind}.


During the last decades, researchers have developed several schemes to solve the BP problem. Arguably, the most popular ones are extended Ptychographic Iterative Engine (ePIE) \cite{maiden2009improved},
Difference Map \cite{thibault2009probe,Elser2003}, Maximum Likelihood (ML) method \cite{thibault2012maximum}, Proximal  Splitting algorithm \cite{hesse2015proximal}, Relaxed Averaged Alternating Reflections (RAAR \cite{Luke2005}) based algorithms \cite{marchesini2016sharp}, and generalized  Alternating Direction Method of Multipliers (ADMM) \cite{glowinski1989augmented,Wu&Tai2010,boyd2011distributed}) based BP \cite{chang2018Blind}. Although some implementations of those methods present \textit{ad hoc} solutions for  structural noise removal, there is no formal analysis or in-depth characterization of such experimental problems, even though they can be critical to achieve robust high-resolution images, specially from weakly scattering or low contrast specimens. 

\paragraph{Experimental noise overview}

Direct reconstructions based on raw experimental data often contain visible artifacts. They commonly derive from outliers and structured and randomly distributed uncorrelated noise sources. Below there is a characterization of the main causes of such anomalies.


\noindent\hspace{0.7em}\textbf{\textit{Structured noise}}
\vspace*{-.0em}
\begin{itemize}
\vspace*{-.2em}
\item[\enspace\textit{(1)}]\textit{Parasitic scattering:} Derives from the scattering from any element along the beam path other than the sample and the optical elements desired harmonic order e.g.: slits, pinholes, lens imperfections, harmonic contamination, air, gas, etc. We refer to this as the \textit{background}. 
\vspace*{-.3em}
\item[\enspace\textit{(2)}]\textit{Saturation}: Occurs when the flux exceeds the detector capacity, the range of the Analog to Digital Converter (ADC) or the maximum photon counting rate. 
\vspace*{-.3em}
\item[\enspace\textit{(3)}]\textit{Dark noise}: Produced by the detector dark current, light from position encoders and interferometers, or light from the environment. It differs from the parasitic scattering as it occurs also when the x-ray beam is off and can be measured in advance by turning off the beam and (commonly referred to as \textit{dark frame}). The constant component can be subtracted from the measured signal before applying a reconstruction solver or alternatively it can be incorporated in the generalized \textit{background}.

\end{itemize}

\noindent\hspace{0.7em}\textbf{\textit{Random noise}}
\vspace*{-.0em}
\begin{itemize}
\vspace*{-.2em}
\item[\enspace\textit{(4)}]\textit{Photon-counting noise:} Caused by the quantization of the wavefront at the detector. 
\vspace*{-.3em}
\item[\enspace\textit{(5)}]\textit{Read-out and detector noise:} Derives from electronic interference when the ADC converts the charge distribution to a digital signal and the random component of the dark current. The interference and parallel ADC read-out electronics can produce ringing and correlation in the digital signal. 

\end{itemize}

\noindent\hspace{0.7em}\textbf{\textit{Outliers}}
\vspace*{-.0em}
\begin{itemize}
\vspace*{-.2em}
\item[\enspace\textit{(6)}]\textit{Bad frames:} Glitches on the functioning of mechanical components inside an instrument such as shutter timing can cause some measured frames to be corrupted.
\vspace*{-.3em}
\item[\enspace\textit{(7)}]\textit{Cosmic rays:} High energy cosmic particles that penetrate the atmosphere, the building and the instrument enclosure and hit the detector. Although they are rare, those types of signals normally present very high charge.
\vspace*{-.3em}
\item[\enspace\textit{(8)}]\textit{Bad pixels:} Result from fabrication errors may cause dead pixels on a detector, these can be typically measured in advanced and masked out during the reconstruction process.
\end{itemize}

\vspace*{5px}

\setstretch{1}

In this paper we focus on a solution to address \textit{(1)}, \textit{(6)} and \textit{(7)}, regarding structured noise and outliers, and also propose a technique to deal with a combination of different random noise distributions coming from \textit{(4)} and \textit{(5)}. The noise from \textit{(3)} and \textit{(8)} can be measured in advance, and it is commonly subtracted. Saturation noise \textit{(2)} can be equally identified and addressed by masking out the saturated signal areas and it is outside the focus of this paper. Random noise sources \textit{(4)} and \textit{(5)} have been considered in the literature \cite{godard2012noise,thibault2012maximum,odstrvcil2018iterative,chang2018Blind}, but there is no solution that can address the combination of two or more random noise distributions. In this paper, we propose a mixed-noise  term for ptychography to effectively deal with any combination of Poisson and Gaussian noise sources. 

Some of the previous pathologies can be identified inspecting the measured data. An illustrative example of parasitic noise~\textit{(1)} is reported in Fig.~\ref{fig0}. A single measured diffraction pattern (Fig. \ref{fig0}(a)) presents a distinct reverse ``L-Shape'' artifact, which consistently appears on all other diffraction measurements from the same dataset. 
If parasitic scattering is not considered, a standard ptychography reconstruction produces very noisy and low contrast images (Figs. \ref{fig0}(d) and \ref{fig0}(e)). The phase part is lost in some areas, concealed by the biased background of the recovered phase, and the resulted intensities (Fig. \ref{fig0}(b)) contain a slight alteration of the parasitic noise from the measured data. (A more detailed comparison between different methods, with and without structured noise correction, is provided in section \ref{Experimental results})

\begin{figure}[ht!]
\begin{center}
\subfigure[Diffraction pattern]{\includegraphics[width=.25\textwidth,trim={.8cm .6cm .6cm .8cm},clip]{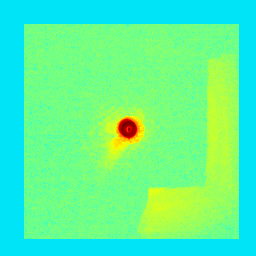}}
\subfigure[RAAR]{\includegraphics[width=.25\textwidth,trim={.8cm .6cm .6cm .8cm},clip]{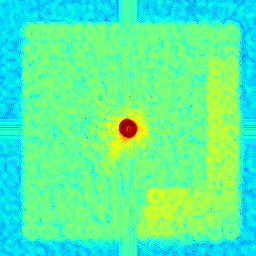}}
\subfigure[Proposed alg.]{\includegraphics[width=.25\textwidth,trim={.8cm .6cm .6cm .8cm},clip]{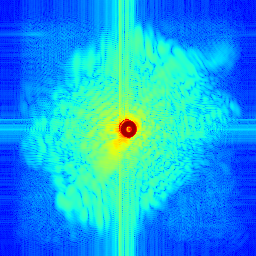}}
\subfigure{\includegraphics[width=.07\textwidth]{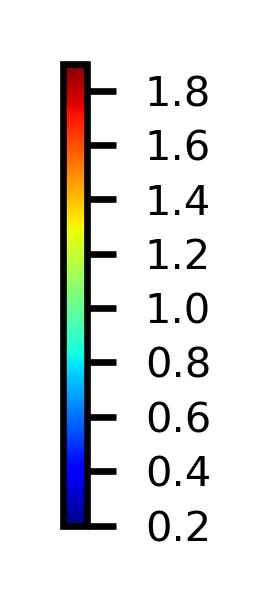}}
\addtocounter{subfigure}{-1}
\\
\subfigure{\includegraphics[width=.05\textwidth]{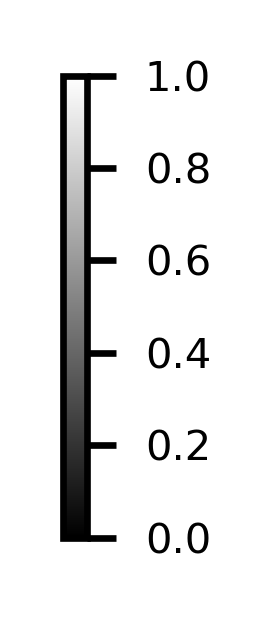}}
\addtocounter{subfigure}{-1}
\subfigure[RAAR (amp.)]{\includegraphics[width=.21\textwidth]{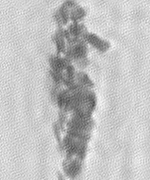}}
\subfigure[RAAR (phase)]{\includegraphics[width=.21\textwidth]    {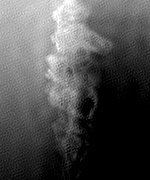}}
\subfigure[Proposed alg. (amp.)]{\includegraphics[width=.21\textwidth]{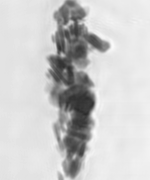}}
\subfigure[Proposed alg. (phase)  ]{\includegraphics[width=.21\textwidth]{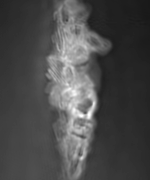}}
\subfigure{\includegraphics[width=.07\textwidth]{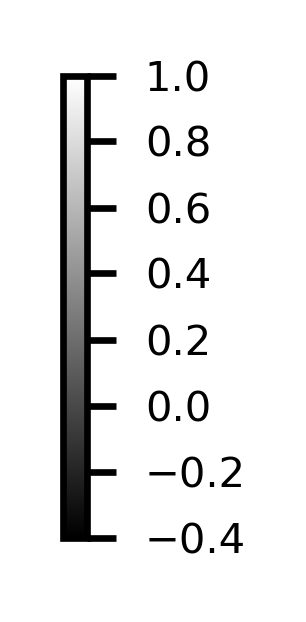}}
\addtocounter{subfigure}{-1}
\end{center}
\caption{Reconstruction results using RAAR (implemented in SHARP \cite{marchesini2016sharp}) and the proposed algorithm. The dataset corresponds to one of the 3D ptychography experiments from the results presented in~\cite{yu2018three}. (a) Measured intensities from a single diffraction pattern.  Recovered intensities (b), object amplitude (amp.) (d) and phase (e), respectively, using RAAR without structured noise correction.  Recovered intensities (c), object amplitude (f) and phase (g), respectively, using the proposed algorithm (alg.). {The figures in the first row are shown in scale $(\cdot)^{0.05}$ following the colorbar shown on the top right corner; the figures in the second row are shown in linear scale following the colorbars on the left corner (amplitude) and right corner (phase). }}\vspace{-.3em}
\label{fig0}
\end{figure}

\paragraph{Related work}

A beam stop scheme was employed in \cite{reinhardt2017beamstop} to reduce the parasitic noise using intensities from  two raster scans ({i.e. scanning on a Cartesian 2D grid}) with and/or without beamstop.
However, the artifacts caused by the structured noise are still noticeable in Fig. 2 of \cite{reinhardt2017beamstop}, and some fine features still appear undefined. A multiple-mode approach in \cite{enders2014ptychography} was proposed to reduce the background.
In \cite{wang2017background}, the data was  pre-processed by subtracting an adjustable scalar value multiplied by the \textit{dark frame}.
A method to remove the parasitic background automatically employed a gradient descent method with adaptive stepsizes was proposed in \cite{marchesini2013augmented} 
and implemented using a distributed multi-GPU implementation \cite{marchesini2016sharp} in a real-time streaming production code at a user facility \cite{daurer2017nanosurveyor}. 
 
 {
 Several researchers have also studied techniques to remove the random noise from photon-counting or read-out noises by employing a sparse regularization technique  to further improve the image quality for conventional ptychography, .e.g.  
Tikhonov regularization with nonlinear conjugate gradient method  \cite{thibault2012maximum}, total variation regularization with ADMM \cite{chang2016Total} and dictionary learning method with proximal algorithm \cite{chang2018denoising}.
Specially, for the more practical noises, like a mix of Poisson and Gaussian noises, several variational methods  have been proposed for linear inverse problems, e.g. generalized Anscombe transform \cite{murtagh1995image}, total variation based Shift-Poisson method \cite{chakrabarti2012image} and  weighted least squares method \cite{li2015reweighted}. However, such methods have not been applied to the conventional ptychography yet, which is essentially a nonlinear inverse problem.}  
Beyond algorithmic methods, there are also several contributions from beamline scientists attenuating the background problem via hardware solutions~\cite{kirby2013low,wiedorn2017post}. 


\paragraph{Motivations and contributions}

In a ptychography experimental setup it is impossible to isolate parasitic noise from the measured data. Furthermore, the data is often contaminated by outliers and an additional mix of random noise types. In this context, an automatic denoising algorithm with convergence guarantee to jointly remove both structured and random noise can be very valuable. 
In this paper, we propose ADMM Denoising for Phase retrieval (ADP), a new algorithm that addresses all the main sources of experimental noise in a blind X-ray ptychography experimental setting. The proposed algorithm achieves similar convergence speed as state-of-the-art blind ptychography methods and it is constructed on the framework presented in \cite{chang2018Blind}. Even if additional hardware or pre-processing techniques are in place, ADP can be used to further enhance the reconstruction quality of the retrieved object.
The main contributions of this work are listed below:
\begin{itemize}
\item[i.] 
 A new algorithm for automatic structured and random mixed noise removal with outliers correction. The algorithm is based on the forward physical model of the parasitic scattering noise ({additive frame-invariant background}) and on the maximum a posteriori (MAP) estimation combined with the shift-Poisson method \cite{chakrabarti2012image} for mixed types of noise. By assuming the piecewise smoothness of sample patches, We propose an additional framewise regularization term  to further enhance the denoising process and improve the image quality, instead of using a simple sparsity term of the sample~\cite{chang2016Total}. {To the best of our knowledge, it is the first time an iterative algorithm with rigorous analysis for background removal in ptychographic imaging is proposed.}

\item[ii.] By formulating the parasitic noise variable with positive constraint as a quadratic term,  fast ADMM algorithms are designed for the proposed model, with and without regularization. The convergence guarantee of the method is proved under very mild conditions. Computationally, each subproblem can be solved very efficiently, using pointwise operations or simple linear algebra solvers, requiring few arithmetic operations per iteration and exposing high parallelism. 

\item[iii.] Multiple numerical experiments on both simulated and experimental data from different light sources demonstrate the enhanced quality of the proposed algorithm. Reconstruction results using the proposed algorithm can be found in Fig.~\ref{fig0} and in the experimental results section.

\end{itemize}

\section{Proposed iterative algorithm}

\paragraph{Mathematical formula for standard ptychography}
In a standard ptychography experiment,  a localized coherent X-ray probe (or illumination) $\omega$  scans through an image (or sample) $u$, while the detector
collects a sequence of phaseless intensities  in the far field.
Throughout this paper, we consider the following discrete setting:

\vspace*{7px}
The variable \noindent$u\in\mathbb C^n$ ($\mathbb C^{n}$ is the complex Euclidean space) corresponds to a 2D sample with $\sqrt{n}\times\sqrt{n}$ pixels, and $\omega\in\mathbb C^{\bar m}$  is a localized 2D  probe   with  $\sqrt{\bar m}\times \sqrt{\bar m}$ pixels, both $u$ and $\omega$ written as  a vector by a lexicographical order. A stack of phaseless measurements ~$\tilde I_j\in \mathbb R_+^{\bar m}~\forall 0\leq j\leq J-1$ ($\mathbb R_+^{\bar m}$ is the Euclidean space with non-negative elements) is collected with $
\tilde I_j=|\mathcal F(\omega\circ \mathcal S_j u)|^2,
$
where $|\cdot|$ represents the element-wise absolute value of a vector, 
 $\circ$ denotes the element-wise multiplication, and $\mathcal F$ denotes the normalized discrete Fourier transformation. $\mathcal S_j\in \mathbb R^{\bar m\times n}$ is a binary  matrix that specifies a small window  with the index $j$ and  size  $\bar m$ over the entire sample $u$. The BP problem can then be expressed as follows:  
\begin{equation}\label{PtychoPR}
{\text{\qquad To find ~}\omega \in \mathbb C^{\bar m}\text{~and~} u\in \mathbb C^n,~~} \text{such that}~~|\mathcal A(\omega,u)|^2= \tilde I,
\end{equation}
where  bilinear operators  $\mathcal A:\mathbb C^{\bar m}\times \mathbb C^{n}\rightarrow \mathbb C^{m}$  and
 $\mathcal A_j:\mathbb C^{\bar m}\times \mathbb C^{n}\rightarrow \mathbb C^{\bar m}~\forall 0\leq j\leq J-1$,  are denoted as:
  $
 \mathcal A(\omega,u):=(\mathcal A_0^T (\omega,u), \mathcal A_1^T(\omega,u),\cdots, \mathcal A_{J-1}^T(\omega,u))^T,$
 $\mathcal A_j(\omega,u):=\mathcal F(\omega\circ \mathcal S_j u),$
 and $\tilde I:=({\tilde I}^T_0, {\tilde I}^T_1, \cdots, \tilde{I}^T_{J-1})^T\in \mathbb R^m_+$ ($m=J\times \bar m$).

\subsection{Proposed  model}
{
Experimentally, not all of the parasitic scattering  (background) is  the same on each frame (here, ``frame'' means  one image from the CCD detector). However, the frame-invariant component is one of the key sources of the artifacts in standard reconstruction results due to the redundancy of ptychography \cite{reinhardt2017beamstop,wang2017background}. Essentially,  it is an  additive components of signals, apart from those originating from the sample and the beam hitting it. } 
Hence, we start by assuming that the background  of each frame is approximately unchanged  \cite{marchesini2013augmented}, i.e., it is frame-invariant.
It is important to note that parasitic noise can also be interpreted as an independent structured noise source per frame. {The frame-invariant approximation is critical for proper modeling and the problem is actually not well defined if this approximation is not considered.  This is due to the fact that the independent noise per frame alternative has a trivial solution where the background is equal to the residual between the measured and reconstructed intensities from the iterative solution of the sample and probe. The frame-invariant approximation is, besides simple, very efficient at removing structural noise to produce high-contrast recovery images.} 

Formally, the true intensity  data $\tilde I$ is contaminated by non-negative parasitic structured noise $\hat\phi\in \mathbb R_+^{\bar m}$, such that the measured data  $\hat I=(\hat I_0^T,\hat I_1^T,\cdots,\hat I_{J-1}^T)^T\in \mathbb R_+^m$ with $\hat I_j\in \mathbb R_+^{\bar m}, \forall 0\leq j\leq J-1$ is recorded for each frame as follows:
\begin{equation}
\label{eqBkg}
0\leq \hat I_j=\tilde I_j+\hat\phi~\forall 0\leq j\leq J-1.
\end{equation}
In practice, the data is contaminated by noise as  
\begin{equation}
\label{eqBkgPoi}
\hat I_j=\mathrm{Noi}(\tilde I_j +\hat \phi)~\forall 0\leq j\leq J-1,
\end{equation}
where $\mathrm{Noi}$ denotes the degrading operator caused by different noise sources.
Without loss of generality, we  consider the Poisson and Gaussian mixed noise model with parasitic noise as:
\begin{equation}
\hat I_j(t)\stackrel{\mathrm{i.i.d}}{\sim}\mathrm{Poisson}(|\mathcal A_j(\omega, u)(t)|^2+\hat\phi(t))+\mathrm{n}_j(t),
\end{equation}
where $\mathrm{i.i.d.}$ is short for ``Independent and Identically Distributed'', $\mathrm{n}_j$ denotes the white Gaussian noise with zero mean and variance $\sigma>0.$ It is not difficult to verify that
$\mathrm{var}(\hat I_j(t)+\sigma^2)=\mathrm{var}(\hat I_j(t))=|\mathcal A_j(\omega, u)(t)|^2+\hat \phi+\sigma^2,$ and $\mathrm{mean}(\hat I_j(t)+\sigma^2)=|\mathcal A_j(\omega, u)(t)|^2+\hat\phi+\sigma^2$. Hence,  $I_j(t):=\hat I_j(t)+\sigma^2$ has the same variance and mean.  By further estimating $ I_j$  by a Poisson distribution  with the shift-Poisson technique \cite{chakrabarti2012image} and using KL divergence derived by maximum likelihood estimation of Poisson noise,
the following optimization problem is derived:
\begin{equation}
\min_{\omega,u,\phi} \tfrac12\sumMy\nolimits_j\left\langle \bm 1_{\bar m}, |\mathcal A_j(\omega, u)|^2+\phi-I_j\circ \log(|\mathcal A_j(\omega, u)|^2+\phi)\right\rangle+\mathbb I_{\mathscr E}(\phi),
\end{equation}
where  $\bm 1_{\bar m}$ is a vector of all ones, $\phi:=\hat\phi+\sigma^2$, $\langle\cdot,\cdot\rangle$ denotes the inner product   as $\langle v_1,v_2\rangle:=\sum\nolimits_{j=0}^{\bar m} v_1 (t)v_2(t)~\forall v_1, v_2\in \mathbb R^{\bar m}$, and positivity constraint set is defined as $\mathscr E:=\big\{\phi\in \mathbb R_+^{\bar m}:~\phi\geq 0\big\}.$ 
 Here, the indicator function $
\mathbb I_{\mathscr E}$ is defined as: 
$\mathbb I_{\mathscr E}(\phi)=0,$ if $\phi\in \mathscr E$; $\mathbb I_{\mathscr E}(\phi)=+\infty$, otherwise.

\begin{rem}
In order to jointly estimate the background  and illumination (double-blind), based  on Eq. \eqref{eqBkg}, a variant optimization model with least squares fitting can be established regardless of the specific noise mechanism as
\begin{equation}\label{eqGenModel}
\begin{split}
\min\limits_{\omega,u,\phi}& \tfrac{1}{2} \textstyle\sum\nolimits_{j=0}^{J-1} \big\||\mathcal A_j(\omega, u)|^2+\phi- I_j \big\|^2+\mathbb I_{\mathscr E}(\phi).
\end{split}
\end{equation}
However, when this metric is used to measure the distance between the collected and recovered intensities, the corresponding first-order iterative algorithm is significantly slow \cite{chang2018Blind}. This is because the residual slowly decreases as the iterations go, and the subproblem with respect to the auxiliary variable is related with quartic equations, which are solved with high computational cost \cite{chang2017variational}.
\end{rem}

In order to  deal with the misfit caused by experimental outliers, a weight vector $C:=(C_0^T,C_1^T,\cdots,C_{J-1}^T)^T\in\mathbb R^m$ is introduced to improve the fitting residual.
To further handle the noise or other artifacts, we consider the total variation regularization on the split patches of the sample, instead of the regularization of the entire sample in \cite{chang2016Total}.
As a result, we propose the following Kullback-Leibler (KL) divergence-based problem with framewise sparsity  and parasitic noise retrieval:
%

\begin{equation}\label{eqGenModel-II}
\begin{split}
&\min_{\omega, u,\phi}  \tfrac12 \sumMy\nolimits_j\Big\langle C_j, |\mathcal A_j(\omega, u)|^2+\phi+\varepsilon^2\bm 1_{\bar m} -(I_j+\varepsilon^2\bm 1_{\bar m})\circ\log(|\mathcal A_j(\omega, u)|^2+\phi+\varepsilon^2\bm 1_{\bar m})\Big\rangle\\
&\qquad+\lambda \textstyle\sum\nolimits_j \mathrm{TV}(\mathcal S_j u)+\mathbb I_{\mathscr E}(\phi),
\end{split}
\end{equation}
with a penalization parameter $\varepsilon>0,$
where $\mathrm{TV}$ denotes the total variation and $\lambda$ is a positive parameter to balance the regularization and data fitting terms.

%

%


\subsection{ADMM denoising for phase retrieval (ADP)}

{
Generally speaking, the ADMM is a variant of the augmented
Lagrangian method \cite{glowinski1989augmented,Wu&Tai2010}  that uses partial updates for the dual
variable, and each subproblem can be easily solved compared
with the original optimization problem. Due to its scalability
and flexibility, it has been a popular algorithm for large-scale
optimization problems arising in computer vision, statistics, 
machine learning, and other related areas, and see more details
in the review paper \cite{boyd2011distributed} and references therein.
 Since the proposed model is non-convex and non-smooth,  ADMM will be adopted to solve it, which allows for bigger stepsizes by avoiding directly calculating the gradient of the objective functional and therefore has fast convergence with low computation cost per iteration.
}

We formulate below the proposed ADP algorithm to solve Eq. \eqref{eqGenModel-II}.
First, we consider the problem without regularization by setting $\lambda=0$.
If directly following \cite{wen2012,chang2016phase}, the subproblem with respect to  the variable $\phi$ does not have a closed form solution:
\begin{equation}
\begin{split}
&\min\limits_{\omega,u,\phi} \tfrac12 \sumMy\nolimits_j\Big\langle C_j, |\mathcal A_j(\omega, u)|^2+\phi+\varepsilon^2\bm 1_{\bar m}-(I_j+\varepsilon^2\bm 1_{\bar m})\circ\log(|\mathcal A_j(\omega, u)|^2+\phi+\varepsilon^2\bm 1_{\bar m})\Big\rangle+\mathbb I_{\mathscr E}(\phi).
\end{split}
\end{equation}
%
A gradient descent or proximal type algorithm reported in \cite{chang2018Blind} to directly solve this problem presents very slow convergence. 
 In order to develop a more efficient algorithm  for each subproblem, we first rewrite  Eq. \eqref{eqGenModel-II} with $\lambda=0$ as
\begin{equation}\label{eqVarGenModel}
\begin{split}
&\min\limits_{\omega,u,\tilde\phi}  \tfrac12 \sumMy\nolimits_j\Big\langle C_j, |\mathcal A_j(\omega, u)|^2+\tilde\phi^2+\varepsilon^2\bm 1_{\bar m}-(I_j+\varepsilon^2\bm 1_{\bar m})\circ\log(|\mathcal A_j(\omega, u)|^2+\tilde\phi^2+\varepsilon^2\bm 1_{\bar m})\Big\rangle,
\end{split}
\end{equation}
where $\tilde\phi:=\sqrt{\phi}.$ Note that the constraint for $\phi$ is automatically removed in Eq. \eqref{eqVarGenModel}. Also, the variables $\mathcal A_j(\omega, u)$ and $\tilde\phi$ play now the same role in the objective function and adding auxiliary variables can derive a more efficient algorithm than directly solving the previous problem.  

Introducing the auxiliary variables $z_j=\mathcal A_j(\omega,u)$ and $\mu_j=\tilde\phi~~\forall 0\leq j\leq J-1,$
produces the following equivalent constraint optimization problem: 
\begin{equation}\label{eqVarGenModel-constr}
\begin{split}
\min\limits_{\omega, u,z,\mu,\tilde\phi}& \mathcal G_\varepsilon(z,\mu),~~\text{such that} ~z_j-\mathcal A_j(\omega,u)=0,~\mu_j-\tilde \phi=0~\forall 0\leq j\leq J-1,
\end{split}
\end{equation}
with
\begin{equation}
\begin{split}
&\mathcal G_\varepsilon(z,\mu):=
\tfrac12 \sumMy\nolimits_j\Big\langle C_j, |z_j|^2+\mu_j^2+\varepsilon^2\bm 1_{\bar m}-(I_j+\varepsilon^2\bm 1_{\bar m})\circ \log(|z_j|^2+\mu_j^2+\varepsilon^2\bm 1_{\bar m})\Big\rangle.
\end{split}
\end{equation}

The corresponding augmented Lagrangian for Eq. \eqref{eqVarGenModel-constr} reads:
\begin{equation}
\label{eqAL}
\begin{split}
&\mathcal L(\omega, u,z,\mu,\tilde\phi,\Lambda_1,\Lambda_2):=\mathcal G_\varepsilon(z,\mu)+r\sumMy_j\Re\langle \Lambda_{1,j},z_j-\mathcal A_j(\omega,u)\rangle\\
&+\tfrac{r}{2}\sumMy_j\|z_j-\mathcal A_j(\omega,u)\|^2+r\sumMy_j\langle \Lambda_{2,j},\mu_j-\tilde \phi\rangle+\tfrac{r}{2}\sumMy_j\|\mu_j-\tilde \phi\|^2,
\end{split}
\end{equation}
where $\Re$ denotes the real part of the complex-valued number.
The proposed ADP algorithm is formulated as follows:
\begin{equation}
\left\{
\begin{aligned}
\omega^{k+1}       &=\arg\min_{\omega}\mathcal L(\omega, u^k,z^k,\mu^k,\tilde\phi^k,\Lambda_1^k,\Lambda_2^k)+\tfrac{\alpha_1}{2}\|\omega-\omega^k\|^2;\\
u^{k+1}            &=\arg\min_{u}\mathcal L(\omega^{k+1}, u,z^k,\mu^k,\tilde\phi^k,\Lambda_1^k,\Lambda_2^k)+\tfrac{\alpha_2}{2}\|u-u^k\|^2;\\
(z^{k+1},\mu^{k+1})&=\arg\min_{z,\mu}\mathcal L(\omega^{k+1}, u^{k+1},z,\mu,\tilde\phi^k,\Lambda_1^k,\Lambda_2^k);\\
\tilde\phi^{k+1}   &=\arg\min_{\tilde\phi}\mathcal L(\omega^{k+1}, u^{k+1},z^{k+1},\mu^{k+1},\tilde\phi,\Lambda_1^k,\Lambda_2^k);\\
\Lambda_1^{k+1}&=\Lambda_1^k+z^{k+1}-\mathcal A(\omega^{k+1},u^{k+1}); \\
\Lambda_{2,j}^{k+1}&=\Lambda_{2,j}^k+\mu_{j}^{k+1}-\tilde\phi^{k+1}~\forall 0\leq j\leq J-1,
\end{aligned}
\right.
\end{equation}
with $\Lambda_1:=(\Lambda_{1,0}^T,\Lambda_{1,1}^T,\cdots,\Lambda_{1,J-1}^T)^T\in\mathbb C^m,$ and $\Lambda_2:=(\Lambda_{2,0}^T,\Lambda_{2,1}^T,\cdots,\Lambda_{2,J-1}^T)^T\in\mathbb R^m.$

The solution to each  subproblems above are reported in Algorithm 1, and further details can be found in the Appendix~\ref{apdx-0}.
When setting $\mu_j^0=\tilde\phi^0=0,$ and $\Lambda^0_2=0,$ following the above algorithm we have that $\mu_j^k=0~\forall k=1,2,\cdots,$ since $\mu^\star=0$ is a stationary point of the proposed model Eq. \eqref{eqGenModel-II}. Therefore, we use the following \textit{warm-start} scheme to initialize $\mu$: first solve a problem without structured noise correction by setting $\mu_j^0=0$, and after $J_0$ iterations reset $\mu_j^k$ using ${\mu_j^{k+1}:=\sqrt{\max\{0,\tfrac{1}{J}\textstyle\sum_j (I_j-|\mathcal A(\omega^{k+1},u^{k+1})|^2)\}}}.$ 
{The weight function $\{C_j\}$ can be simply fixed to be one. It can also be dynamically changed if the measured data contains outliers, such that the ADP algorithm is slightly modified with additional update of the weight function between Step 3 and Step 4,  and  please see more details for update scheme of this function in Remark \ref{rem1}.
}


\begin{center}
\vskip .15in
{
\hskip .2in
\hrule\vskip .1in
\centering  \textbf{Algorithm 1: ADP} \vskip .1in
\hrule\vskip .05in}
\begin{enumerate}
\item[\textbf{Step 0.}] Initialization: $u^0:=\bm 1_{n}$,
$\omega^0:=\tfrac{1}{J}\textstyle\sum_j \mathcal F^*\sqrt{I_j}$ ($\mathcal F^*$ denotes the normalized inverse discrete Fourier Transform),
$z_j^0:=\mathcal A_j(\omega^0, u^0), \mu^0=0$, $\Lambda_1=0, \Lambda_2=0$.
\item[\textbf{Step 1.}]
Refine the illumination by
\begin{equation}
\boxed{
\omega^{k+1}=\tfrac{\textstyle\sum_j\mathcal S_j (u^k)^*\circ \mathcal F^*(z^k_j+\Lambda^k_{1,j})+\alpha_1  \omega^k}{\textstyle\sum_j |\mathcal S_j u^k|^2+\alpha_1\bm 1_{\bar m}}.
}
\end{equation}

\item[\textbf{Step 2.}]
Refine the object by
\begin{equation}
\boxed{
u^{k+1}=\tfrac{\textstyle\sum_j\mathcal S_j^T((\omega^{k+1})^*\circ\mathcal F^*(z^k_j+\Lambda^k_{1,j}))+\alpha_2  u^k}{\textstyle\sum_j\mathcal S_j^T|\omega^{k+1}|^2+\alpha_2\bm 1_{n}}.
}
\end{equation}

\item[\textbf{Step 3.}] Update $z_j^{k+1}, \mu_j^{k+1}$ by
\begin{equation}
\boxed{
((z_j^{k+1})^T, (\mu_j^{k+1})^T)^T=((\rho_j^k)^T,(\rho_j^k)^T)^T\circ \mathrm{sign}(\bar X_j^k),
}
\end{equation}
with
\begin{equation}
(\bar X_j^k)^T:=(\mathcal A^T_j(\omega^{k+1},u^{k+1})-(\Lambda^k_{1,j})^T,(\tfrac{1}{J}\textstyle\sum\nolimits_j \mu^{k}_j)^T-(\Lambda^k_{2,j})^T)
\end{equation}
where  $\rho_j^k$ is computed as: 

\quad If $\varepsilon=0$:
\begin{equation}
{
\rho_j^k:=\tfrac{r|\bar X_j^k|_{*}+\sqrt{r^2|\bar X_j^k|_{*}^2+4(C_j+r\bm 1_{\bar m})\circ C_j\circ I_j}}{2(C_j+r\bm 1_{\bar m})};
}
\end{equation}
\quad Else if $\varepsilon>0$: it is iteratively determined by
\begin{equation}
\rho_{j, l+1}=\max\Big\{0, \Big((1-\tau r)\bm 1_{\bar m}-\tau C_j+\tau C_j\circ\tfrac{{I_j+\varepsilon^2\bm 1_{\bar m}}}{{\rho_{j,l}^2+\varepsilon^2\bm 1_{\bar m}}}\Big)\circ\rho_{j,l}+\tau r |\bar X_j^k|_*\Big\}
\end{equation}

 \qquad \enspace \enspace \enspace $l=0,1,\cdots,$ with $\rho_{j,0}:=\rho_j^{k-1}
 $
\item[\textbf{Step 4.}]
\begin{equation}\label{eqUpMul}
\boxed{
\begin{aligned}
&\Lambda^{k+1}_{1,j}=\Lambda^k_{1,j}+z^{k+1}_j-\mathcal A_j(\omega^{k+1},u^{k+1});\\
&\Lambda^{k+1}_{2,j}=\Lambda^k_{2,j}+\mu^{k+1}_j-\tfrac{1}{J}\textstyle\sum\nolimits_j \mu^{k+1}_j.
\end{aligned}
}
\end{equation}

\item[\textbf{Step 5.}]
If $k=J_0$: reinitialize the average of $\mu^{k+1}$ by \begin{equation}\boxed{\tfrac{1}{J}\sumMy\mu_j^{k+1}:=\sqrt{\max\{0,\tfrac{1}{J}\textstyle\sum_j (I_j-|\mathcal A(\omega^{k+1},u^{k+1})|^2)\}};  }\end{equation}
 Else if (\textit{satisfying some stopping condition}): output $u^{k+1}$ as the final reconstructed result
 Else: set $k:=k+1$, and goto Step 1
    \vskip .2in
\end{enumerate}
\hrule\vskip .05in
\end{center}
\vskip .1in

\paragraph{Convergence analysis}
We assess the convergence of ADP for the proposed model with $\lambda=0$ in the following theorem:
\begin{thm}
By denoting $Y^k:=(\omega^k,u^k,z^k,\mu^k,\tilde \phi^k,\Lambda_1^k,\Lambda_2^k),$ any limit point of $\{Y^k\}$ produced by ADP  is the stationary point of Eq. \eqref{eqGenModel-II} with $\lambda=0$ if the stepsize $r$ is sufficiently large and $\varepsilon>0$.
\label{thm1}
\end{thm}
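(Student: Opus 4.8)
The plan is to treat the augmented Lagrangian $\mathcal L$ of Eq.~\eqref{eqAL} as a Lyapunov function and follow the standard machinery for nonconvex ADMM, adapted to the consensus-type splitting ($z_j=\mathcal A_j(\omega,u)$, $\mu_j=\tilde\phi$) used here. The argument rests on four ingredients: (i) a sufficient-decrease estimate for $\mathcal L$ over one full sweep of the five primal blocks; (ii) a bound on the dual ascent in terms of primal movement; (iii) a uniform lower bound on $\mathcal L$ together with boundedness of $\{Y^k\}$; and (iv) passage to the limit in the first-order optimality conditions. The finite reinitialization in Step~5 occurs only once (at $k=J_0$), so it does not affect the limit-point analysis, which concerns the tail of the sequence.

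For ingredient (i) I would estimate the change of $\mathcal L$ block by block. The $\tilde\phi$-subproblem is a plain quadratic, and the $z$- and $\mu$-subproblems are strongly convex once $r$ exceeds the (bounded, because $\varepsilon>0$) negative curvature of $\mathcal G_\varepsilon$; each of these updates therefore lowers $\mathcal L$ by at least a fixed positive multiple of the squared block increment. The $\omega$- and $u$-subproblems, being quadratic with the proximal regularizers of strength $\alpha_1,\alpha_2>0$, lower $\mathcal L$ by at least $\tfrac{\alpha_1}{2}\|\omega^{k+1}-\omega^k\|^2$ and $\tfrac{\alpha_2}{2}\|u^{k+1}-u^k\|^2$. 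The only term that raises $\mathcal L$ is the dual update, which contributes exactly $r\|\Lambda_1^{k+1}-\Lambda_1^k\|^2+r\|\Lambda_2^{k+1}-\Lambda_2^k\|^2$.

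Ingredient (ii) is the heart of the proof. Writing the stationarity condition of the $(z,\mu)$-subproblem and comparing it with the multiplier updates of Step~4 yields the identity $r\Lambda_1^{k+1}=-\nabla_z\mathcal G_\varepsilon(z^{k+1},\mu^{k+1})$ and, after the consensus averaging that preserves $\sum_j\Lambda_{2,j}^k=0$, a corresponding relation for $r\Lambda_2^{k+1}$ in terms of $\nabla_\mu\mathcal G_\varepsilon$. Because $\varepsilon>0$ keeps the argument of the logarithm bounded below by $\varepsilon^2$, the gradient of $\mathcal G_\varepsilon$ is Lipschitz on any bounded region, with some constant $L_\varepsilon$; hence $\|\Lambda^{k+1}-\Lambda^k\|\le \tfrac{L_\varepsilon}{r}\|(z^{k+1},\mu^{k+1})-(z^k,\mu^k)\|$, so that the adverse dual term of (i) is dominated by $\tfrac{L_\varepsilon^2}{r}\|(z,\mu)^{k+1}-(z,\mu)^k\|^2$. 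Combining with (i), the net one-sweep change of $\mathcal L$ is bounded above by a negative definite quadratic form in the block increments as soon as $r$ is large enough that the $\tfrac{r}{2}$-type descent of the $(z,\mu)$ blocks beats $\tfrac{L_\varepsilon^2}{r}$; this is precisely where the hypotheses ``$r$ sufficiently large'' and ``$\varepsilon>0$'' are consumed.

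Finally, for (iii)--(iv) I would use the identity of (ii) to substitute the multiplier terms back into $\mathcal L$, exhibiting it as $\mathcal G_\varepsilon$ plus nonnegative quantities; since the KL-type functional $\mathcal G_\varepsilon$ is bounded below, $\{\mathcal L(Y^k)\}$ is monotone decreasing and bounded below, hence convergent, which makes the telescoped sum of block increments finite and forces $\|Y^{k+1}-Y^k\|\to0$; in particular both constraint residuals vanish. Boundedness of $\{Y^k\}$ then provides a convergent subsequence $Y^{k_i}\to Y^\star$, and passing to the limit in the (now residual-free) optimality conditions of each block---the proximal increments dropping out because successive differences vanish---recovers the full KKT system of Eq.~\eqref{eqVarGenModel-constr}, i.e. $Y^\star$ is a stationary point of Eq.~\eqref{eqGenModel-II} with $\lambda=0$. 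I expect the principal obstacle to lie in (iii): the Lipschitz estimate of (ii) is only local, so an \emph{a priori} bound on $\{Y^k\}$ must be established independently---most naturally by exploiting the descent of $\mathcal L$, the coercivity of $\mathcal G_\varepsilon$ in $(z,\mu,\tilde\phi)$ supplied by the $\varepsilon^2$-shifted penalty, and the explicit closed forms of Steps~1--2 to bound $\omega$ and $u$---before $L_\varepsilon$ can legitimately be fixed.
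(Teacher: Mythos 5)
Your proposal follows essentially the same route as the paper's own proof: the augmented Lagrangian of Eq.~\eqref{eqAL} is used as a Lyapunov function; block-wise sufficient decrease (with the proximal terms handling the $\omega$- and $u$-blocks) is combined with the identity $0=\nabla\mathcal G_\varepsilon(z^{k+1},\mu^{k+1})+r((\Lambda_1^{k+1})^T,(\Lambda_2^{k+1})^T)^T$ to dominate the dual ascent by primal movement (this is the paper's Lemma~\ref{lem0} feeding into Lemma~\ref{lemDes}, with the threshold $r>4L_\varepsilon$); the multipliers are substituted back into $\mathcal L$ to get a lower bound via $\mathcal G_\varepsilon$; telescoping gives $\|Y^{k+1}-Y^k\|\to 0$; and one passes to the limit in the first-order conditions along a convergent subsequence.

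The one place where you go astray is the closing ``principal obstacle.'' With $\varepsilon>0$ the gradient of $\mathcal G_\varepsilon$ is \emph{globally} Lipschitz, not merely Lipschitz on bounded sets: the problematic term in $\nabla\mathcal G_\varepsilon$ is of the form $C_j\circ(I_j+\varepsilon^2\bm 1_{\bar m})\circ\tfrac{z_j}{|z_j|^2+\mu_j^2+\varepsilon^2\bm 1_{\bar m}}$, whose denominator is bounded below by $\varepsilon^2$ and which decays at infinity, so its differential is uniformly bounded (by $O(\varepsilon^{-2})$ up to the data-dependent factors). This is exactly the content of the paper's Lemma~\ref{lem0}: $L_\varepsilon$ is independent of the iterates, which is precisely why the hypothesis $\varepsilon>0$ appears in Theorem~\ref{thm1}. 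Consequently no \emph{a priori} bound on $\{Y^k\}$ is needed anywhere: the descent and dual-control estimates hold unconditionally, and since the theorem only asserts that limit points (when they exist) are stationary, the convergent subsequence $Y^{n_k}\to Y^\star$ (together with $Y^{n_k-1}$, via $\|Y^{k+1}-Y^k\|\to0$) is automatically bounded. Indeed, the boundedness argument you sketch would be hard to carry out: descent of $\mathcal L$ controls $\mathcal A(\omega,u)$, $z$, $\mu$, and $\tilde\phi$, but cannot bound $\omega$ and $u$ separately because of the scaling ambiguity $(\omega,u)\mapsto(c\,\omega,u/c)$; this is why the paper explicitly defers whole-sequence convergence to additional constraints on the object and illumination rather than claiming boundedness of the iterates.
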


The proof of the above theorem can be found in Appendix \ref{apdx-2}. 
The referred proof demonstrates that any limit point of the iterative sequence is the stationary point of the proposed model. In order to prove the convergence of the whole sequence we need more constraints for the object and illumination, and also a careful selection scheme of the proximal parameters $\alpha_1, \alpha_2$ \cite{chang2018Blind}.
{The introduction of $\varepsilon$ is also needed for the convergence analysis. 
Numerically, such penalization does not produce any obvious improvement on either convergence speed or reconstruction quality. Hence, for simplicity, we only show the performance of proposed algorithm with $\varepsilon=0$.
}
\subsection{ADP with regularization (ADPr)}
In this subsection we consider the proposed model~Eq. \eqref{eqGenModel-II} with $\lambda>0.$
Introducing the auxiliary variables $p_j=\nabla \mathcal S_j u,$ $z_j=\mathcal A_j(\omega,u)$ and $\mu_j=\tilde\phi~~\forall 0\leq j\leq J-1$
produces the following equivalent constraint optimization problem:
\begin{equation}\label{eqVarGenModel-constr-Reg}
\begin{split}
&\min\limits_{\omega, u,z,\mu,\tilde\phi} \lambda\textstyle\textstyle\sum\nolimits_j\big\||p_j|\big\|+\mathcal G_\varepsilon(z,\mu),~~\text{such that} ~z_j-\mathcal A_j(\omega,u)=0,~\mu_j-\tilde \phi=0,~p_j-\nabla \mathcal S_j u=0.
\end{split}
\end{equation}

The corresponding augmented Lagrangian reads
\begin{equation}
\begin{split}
&\mathcal L_{reg}(\omega, u,p,z,\mu,\tilde\phi,\Lambda_1,\Lambda_2,\Lambda_3)\\
:&=\mathcal G_\varepsilon(z,\mu)+\textstyle\sum\nolimits_j \Big(r\Re\langle \Lambda_{1,j},z_j-\mathcal A_j(\omega,u)\rangle+\tfrac{r}{2}\|z_j-\mathcal A_j(\omega,u)\|^2+r\langle \Lambda_{2,j},\mu_j-\tilde \phi\rangle\\
&+\tfrac{r}{2}\|\mu_j-\tilde \phi\|^2+\lambda\big\||p_j|\big\|+\beta\Re\langle p_j-\nabla \mathcal S_j u,\Lambda_{3,j}\rangle+\tfrac{\beta}{2}\|p_j-\nabla \mathcal S_j u\|^2\Big).
\end{split}
\end{equation}

We propose the following generalization of ADP to solve the problem above, referred to as ADP with regularization (ADPr):

\begin{equation}
\left\{
\begin{aligned}
\omega^{k+1}       &=\arg\min_{\omega}\mathcal L_{reg}(\omega, u^k,p^k,z^k,\mu^k,\tilde\phi^k,\Lambda_1^k,\Lambda_2^k,\Lambda_3^k)+\tfrac{\alpha_1}{2}\|\omega-\omega^k\|^2;\\
u^{k+1}            &=\arg\min_{u}\mathcal L_{reg}(\omega^{k+1}, u,p^k,z^k,\mu^k,\tilde\phi^k,\Lambda_1^k,\Lambda_2^k,\Lambda_3^k)+\tfrac{\alpha_2}{2}\|u-u^k\|^2;\\
p^{k+1}            &=\arg\min_{p} \mathcal L_{reg}(\omega^{k+1}, u^{k+1},p,z^k,\mu^k,\tilde\phi^k,\Lambda_1^k,\Lambda_2^k,\Lambda_3^k);\\
(z^{k+1},\mu^{k+1})&=\arg\min_{z,\mu}\mathcal L_{reg}(\omega^{k+1}, u^{k+1},p^{k+1},z,\mu,\tilde\phi^{k},\Lambda_1^k,\Lambda_2^k,\Lambda_3^k);\\
\tilde\phi^{k+1}   &=\arg\min_{\tilde\phi}\mathcal L_{reg}(\omega^{k+1}, u^{k+1},p^{k+1},z^{k+1},\mu^{k+1},\tilde\phi,\Lambda_1^k,\Lambda_2^k,\Lambda_3^k);\\
\Lambda_1^{k+1}&=\Lambda_1^k+z^{k+1}-\mathcal A(\omega^{k+1},u^{k+1}); \\
\Lambda_{2,j}^{k+1}&=\Lambda_{2,j}^k+\mu_{j}^{k+1}-\tilde\phi^{k+1}~\qquad\forall 0\leq j\leq J-1;\\
\Lambda_{3,j}^{k+1}&=\Lambda_{3,j}^{k+1}+p_j^{k+1}-\nabla\mathcal S_j u^{k+1}~\forall 0\leq j\leq J-1.
\end{aligned}
\right.
\end{equation}

Details of solvers for these subproblems are reported in Appendix \ref{apdx-0}. We only summarize  the overall algorithm in Algorithm 2 below.
\begin{center}
\vskip .15in
{
\hskip .2in
\hrule\vskip .1in
\centering  \textbf{Algorithm 2: ADP with regularization (ADPr)} \vskip .1in
\hrule\vskip .05in}
\begin{enumerate}
\item[\textbf{Step 0.}] Initialization: $u^0:=\bm 1_{n}$,
$\omega^0:=\tfrac{1}{J}\textstyle\sum_j \mathcal F^*\sqrt{I_j},$
$z_j^0:=\mathcal A_j(\omega^0, u^0)$, $\Lambda_1=0, \Lambda_2=0, \Lambda_3=0, p^0=0,$ and ${\mu_j^0:=0}$

\item[\textbf{Step 1.}]
Refine the illumination as Step 1 of Algorithm 1

\item[\textbf{Step 2.}]
Refine the object by solving the following equation using the conjugate gradient method
\begin{equation}
\boxed{
\begin{aligned}
&\qquad\qquad\qquad\textstyle\sum\nolimits_j\big( \mathrm{diag}(r\mathcal S_j^T |\omega^{k+1}|^2+\alpha_2\bm 1_n)-\beta \mathcal S_j^T \Delta\mathcal S_j\big) u^{k+1}\\
&=\textstyle\sum\nolimits_j\left( r\mathcal S_j^T((\omega^{k+1})^*\circ \mathcal F^*(z^k_j+\Lambda^k_{1,j}))-\beta\mathcal S_j^T\mathrm{div}(p^k_j+\Lambda^k_{3,j})\right)+\alpha_2 u^k,
\end{aligned}
}
\end{equation}
where $\mathrm{diag(\cdot)}$ gives a square diagonal matrix with the elements of a vector on the main diagonal, and $\mathrm{div}(\cdot)$ denotes the discrete divergence operator. 
\item[\textbf{Step 3.}] Refine the gradient of the object by
\begin{equation}
\boxed{
p^{k+1}_j=\max\{ 0, |\nabla\mathcal S_j u^{k+1}-\Lambda^k_{3,j}|-\tfrac{\lambda}{\beta}\bm 1_{\bar m} \}\circ \tfrac{\nabla\mathcal S_j u^{k+1}-\Lambda^{k}_{3,j}}{|\nabla\mathcal S_j u^{k+1}-\Lambda^k_{3,j}|}.
}
\end{equation}

\item[\textbf{Step 4.}] Update $z_j^{k+1}, \mu_j^{k+1}$ as Step 3 of Algorithm 1
\item[\textbf{Step 5.}] Update $\Lambda_1^{k+1}, \Lambda_2^{k+1}$ as Step 4 Algorithm 1, and update $\Lambda_3^k$
by
\begin{equation}\label{eqUpMul-Reg}
\boxed{
\begin{aligned}
&\Lambda^{k+1}_{3,j}=\Lambda^k_{3,j}+p^{k+1}_j-\nabla \mathcal S_j u^{k+1}.
\end{aligned}
}
\end{equation}

\item[\textbf{Step 6.}] If $k=J_0$: reinitialize the average of $\{\mu_j^{k+1}\}$ as Step 5 of Algorithm 1;


 Else if (\textit{satisfying some stopping condition}): output $u^{k+1}$ as the final reconstructed result
 
 Else: set $k:=k+1$, and goto Step 1.
    \vskip .2in
\end{enumerate}

\hrule\vskip .05in
\end{center}
\vskip .1in

In a similar manner as with ADP, one can derive the convergence of ADPr. We provide the following theorem omitting the proof details in this case:
\begin{thm}\label{thm2}
Any limit point of the iterative sequence generated by ADPr  is the stationary point of Eq. \eqref{eqGenModel-II} if the stepsizes $r,\beta$ are sufficiently large and $\varepsilon>0$.
\end{thm}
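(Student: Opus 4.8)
The plan is to mirror the proof of Theorem~\ref{thm1} (ADP), enlarging the state to $\tilde Y^k := (\omega^k,u^k,p^k,z^k,\mu^k,\tilde\phi^k,\Lambda_1^k,\Lambda_2^k,\Lambda_3^k)$ so as to absorb the extra primal block $p$, its linear constraint $p_j=\nabla\mathcal S_j u$, the new dual variable $\Lambda_3$, and the nonsmooth convex term $\lambda\sum_j\big\||p_j|\big\|$. The backbone, following the strategy of \cite{chang2018Blind}, is a sufficient-decrease argument for the augmented Lagrangian $\mathcal L_{reg}$ together with boundedness of the iterates, in the spirit of nonconvex ADMM analysis.

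First I would record the property that makes $\varepsilon>0$ indispensable: on the feasible set the argument $|z_j|^2+\mu_j^2+\varepsilon^2\bm 1_{\bar m}$ of the logarithm in $\mathcal G_\varepsilon$ is bounded below by $\varepsilon^2>0$, so $\nabla\mathcal G_\varepsilon$ is Lipschitz continuous in $(z,\mu)$ with a constant $L=L(\varepsilon)$. Combining the first-order optimality of the $(z,\mu)$-subproblem with the $\Lambda_1$-update yields the identity $r\Lambda_{1,j}^{k+1}=-\nabla_{z_j}\mathcal G_\varepsilon(z^{k+1},\mu^{k+1})$, and the analogous relation for $\Lambda_2$ via the $\tilde\phi$-update, whence $\|\Lambda_1^{k+1}-\Lambda_1^k\|$ and $\|\Lambda_2^{k+1}-\Lambda_2^k\|$ are controlled by $\tfrac{L}{r}$ times the primal increments of $(z,\mu)$. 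For the new dual, the $p$-subproblem optimality gives $\beta\Lambda_{3,j}^{k+1}\in-\lambda\,\partial\big\||p_j^{k+1}|\big\|$; since the subdifferential of the group norm lies in the unit ball, $\Lambda_3^k$ is automatically bounded and its increments are tied to those of $p$.

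Next I would establish $\mathcal L_{reg}(\tilde Y^{k+1})\le\mathcal L_{reg}(\tilde Y^k)-c\|\tilde Y^{k+1}-\tilde Y^k\|^2$ for some $c>0$, valid once $r$ and $\beta$ are large enough. Each primal block contributes a decrease: the proximal terms $\tfrac{\alpha_1}{2}\|\cdot\|^2$ and $\tfrac{\alpha_2}{2}\|\cdot\|^2$ handle $\omega$ and $u$ (the $u$-system operator $\mathrm{diag}(r\mathcal S_j^T|\omega^{k+1}|^2+\alpha_2\bm 1_n)-\beta\mathcal S_j^T\Delta\mathcal S_j$ being positive definite), while the quadratic penalties make the $p$-, $(z,\mu)$- and $\tilde\phi$-updates strongly convex. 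The dual-ascent steps increase $\mathcal L_{reg}$, but by the previous paragraph each increase is bounded by $O(1/r)$ or $O(1/\beta)$ times primal increments, so for $r,\beta$ sufficiently large the net change is a strict decrease. Since $\mathcal G_\varepsilon$ is coercive in $(z,\mu)$ and the remaining terms are nonnegative, $\mathcal L_{reg}$ is bounded below along the iterates; monotone decrease then forces $\sum_k\|\tilde Y^{k+1}-\tilde Y^k\|^2<\infty$, hence $\|\tilde Y^{k+1}-\tilde Y^k\|\to 0$ and $\{\tilde Y^k\}$ is bounded.

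Finally, for any convergent subsequence $\tilde Y^{k_i}\to\tilde Y^\star$, the vanishing of successive differences makes the three constraints $z^\star=\mathcal A(\omega^\star,u^\star)$, $\mu^\star=\tilde\phi^\star$ and $p^\star=\nabla\mathcal S u^\star$ hold at the limit; passing to the limit in the optimality conditions of each subproblem---using continuity of $\nabla\mathcal G_\varepsilon$ and closedness of the graph of $\partial\big\||\cdot|\big\|$---recovers the stationarity (KKT) conditions of Eq.~\eqref{eqGenModel-II}, so $\tilde Y^\star$ is a stationary point. I expect the main obstacle to be the treatment of the new $p$-block and its dual $\Lambda_3$: because $\lambda\sum_j\big\||p_j|\big\|$ is nonsmooth, one cannot bound $\|\Lambda_3^{k+1}-\Lambda_3^k\|$ by a Lipschitz-gradient argument as for $\Lambda_1,\Lambda_2$; instead I would lean on the boundedness of $\partial\big\||\cdot|\big\|$ together with the positive-definiteness of the $u$-update operator in the large-$\beta$ regime, so that the $\Lambda_3$-ascent is absorbed by the strong-convexity decrease of the coupled $(u,p)$ updates.
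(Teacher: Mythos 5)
Your skeleton is indeed the route the paper intends: the paper never writes out a proof of Theorem~\ref{thm2}, deferring to the Theorem~\ref{thm1} argument of Appendix~\ref{apdx-2}, and your handling of $\Lambda_1,\Lambda_2$ (express $r\Lambda^{k+1}$ through $\nabla\mathcal G_\varepsilon$, use its Lipschitz constant from Lemma~\ref{lem0}, dominate the dual ascent by primal increments as in Eq.~\eqref{eqEnergyMul}) is exactly the paper's mechanism. The genuine gap is the step you yourself flag, and your proposed repair does not close it. From the $p$-subproblem optimality and the update Eq.~\eqref{eqUpMul-Reg} you correctly get $\beta\Lambda_{3,j}^{k+1}\in-\lambda\,\partial\big\||p_j^{k+1}|\big\|$, hence boundedness $\|\Lambda_{3,j}^k\|\leq (\lambda/\beta)\sqrt{2\bar m}$. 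But what the sufficient-decrease lemma needs is the analogue of Eq.~\eqref{eqEnergyMul}: the ascent term $\beta\sum_j\|\Lambda_{3,j}^{k+1}-\Lambda_{3,j}^k\|^2$ must be bounded by a constant times \emph{squared successive primal differences}. Boundedness of the subdifferential only yields a per-iteration bound of order $\lambda^2\bar m/\beta$, a constant that neither vanishes nor is summable; and because $\partial\big\||\cdot|\big\|$ is set-valued and not Lipschitz, there is no inequality of the form $\|\Lambda_3^{k+1}-\Lambda_3^k\|\leq C\|p^{k+1}-p^k\|$. Strong convexity of the $(u,p)$ updates gives decrease proportional to $\|u^{k+1}-u^k\|^2+\|p^{k+1}-p^k\|^2$, which can be arbitrarily small at iterations where the constraint residual $\Lambda_3^{k+1}-\Lambda_3^k=p^{k+1}-\nabla\mathcal S u^{k+1}$ is not; so the ascent cannot be ``absorbed'' by taking $\beta$ large --- the argument fails for every fixed $\beta$.

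This gap propagates through the rest of your proof: without control of $\|\Lambda_3^{k+1}-\Lambda_3^k\|$ you obtain neither monotone decrease of $\mathcal L_{reg}$, nor $\|\tilde Y^{k+1}-\tilde Y^k\|\to 0$, nor --- crucially --- feasibility of the limit point, since $p^\star=\nabla\mathcal S u^\star$ \emph{is} the statement that the $\Lambda_3$-increments vanish. Ways to actually close it: (i) replace the TV term by a Lipschitz-differentiable surrogate (Huber/Moreau smoothing), after which $\Lambda_3$ is controlled exactly as $\Lambda_1,\Lambda_2$ are via Lemma~\ref{lem0} --- this parallels the role that $\varepsilon>0$ plays for the KL term; (ii) settle for conditional convergence under the added hypothesis $\Lambda_3^{k+1}-\Lambda_3^k\to 0$, in the spirit of \cite{chang2016Total}; or (iii) find structure letting $\Lambda_3^{k+1}$ be expressed through gradients of smooth functions, which is unavailable here because the $u$-optimality only involves $\mathrm{div}(\Lambda_{3,j}+p_j-\nabla\mathcal S_j u)$ and $\nabla\mathcal S_j$ is not surjective. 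Two smaller inaccuracies: the $\Lambda_2$ identity comes from the $\mu$-part of the joint $(z,\mu)$-subproblem, not from the $\tilde\phi$-update; and the lower bound on $\mathcal L_{reg}$ does not follow from ``nonnegativity of the remaining terms'' (the linear multiplier terms have no sign) --- it requires the paper's argument combining the dual identities with Lemma~\ref{lem0}.
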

%
%
\begin{rem}
\label{rem1}
In order to handle outliers, we propose the following adaptive weights $\{C_j\}$. They are designed to produce bigger values when the residual between the measured data and recovered intensities are smaller.
%
The weight function $C_j$ is given based on the inverse of the residual as follows:
\begin{equation}
C_j\leftarrow\tfrac{2}{\gamma}
{\Big|\tfrac{1}{\sqrt{|z^{k+1}_j|^2+|\mu^{k+1}_j|^2+\varepsilon^2\bm 1_{\bar m}}-\sqrt{I_j+\varepsilon^2\bm 1_{\bar m}}}\Big|^{2-\gamma}},
\end{equation}
with $\gamma\leq 2.$
The weight function can be updated between Step 3 and Step 4 in ADP algorithm, or Step 4 and Step 5 in ADPr algorithm.

\end{rem}

\begin{rem}
Raster grid scanning can cause visible periodical artifacts, and therefore a constraint of the lens can be enforced as \cite{marchesini2016sharp}.
 Furthermore, a detector mask can be used as well.  

\end{rem}

\section{Experimental results}\label{Experimental results}

The experimental results of this section are generated using both simulated and experimental ptychography data. In most experiments, the performance of the proposed algorithm is compared with that of SHARP (using RAAR)~\cite{marchesini2016sharp}, a ptychography software solution that implements state-of-the-art blind ptychography algorithms and background retrieval techniques~\cite{marchesini2013augmented} (further details can be found in Appendix A). In some test we will evaluate SHARP without and with background retrieval, the later will be referred to as SHARP-B.
All experimental results in this section employ raster scans.  
 Regarding the parameter initialization of the proposed method, we set $J_0=5$ (after $J_0$ iterations reinitialize the background) for both ADP and ADPr. We use 5 iterations for Step 2  of ADPr. 

 We introduce two different criteria to evaluate performance. For simulation data,  the ground truth can be used to compute the signal-to-noise ratio (SNR) of $u^k$ as:
 \begin{equation}
 \mathrm{SNR}(u^k, u_g)=-10\log_{10}{\sum\limits_{t=0}^{n-1}|\zeta^* u^k(t+T^*)-u_g(t)|^2}/{\|\zeta^* u^k\|^2},
 \end{equation}
where $u_g$ corresponds to the ground truth image. The error is computed up to the translation $T^*$, and the phase shift and scaling factor $\zeta^*$ are determined by:
 \begin{equation}
 (\zeta^*,T^*):=\arg\min\limits_{\zeta\in \mathbb C, T\in \mathbb Z}\sum_{t}|\zeta u^k(t+T)-u_g(t)|^2.
 \end{equation}
For experiment data, the R-factor
for ADP and ADPr is used, defined
as:
\begin{equation}
\text{R-factor}^k:= \tfrac{\sum\nolimits_j\big\|\sqrt{|\mathcal A_j (\omega^k, u^k)|^2+(\tilde\phi^k)^2}-\sqrt{I_j} \big\|_1}{\|\sqrt{I}\|_1}.
\end{equation}

The R-factor of  SHARP is defined by setting $\tilde\phi=0$, whereas the R-factor of SHARP-B is defined in the same way as with ADP and ADPr.



\subsection{Synthetic data}

Given a true illumination and sample $\omega^\star$ and $u^\star$, respectively, and $\phi^\star$ as the parasitic noise component, the simulated intensities are generated as follows:
\begin{equation}
\label{eq:synt}
\hat I_j(t)\stackrel{\mathrm{i.i.d}}{\sim}\mathrm{Poisson}\left ( |\mathcal A_j(\omega^\star, u^\star)|^2(t)+\phi^\star(t) \right )+\mathrm{n}_j(t),~\forall~0\leq t\leq \bar m-1,
\end{equation}
where $\mathrm{n}_j$ denotes the white Gaussian noise {(the variance is set to $1.0\times 10^{-6}\times\tfrac{1}{m}\sum_t I_{poi}(t)$, with $I_{poi}(t):=\mathrm{Poisson}\left ( |\mathcal A_j(\omega^\star, u^\star)|^2(t)+\phi^\star(t) \right )$ )}. We also consider a further corruption of the intensities by outliers. They are simulated as patches of bad measurements appearing in 10\% of the frames, with size $\sim$ 50\% of the frame size. The corrupted intensity values are set to 0.
We use raster grid scan with stepsizes of 16 pixels, so an additional constraint of the illumination (a.k.a. support of the lens or illumination Fourier mask) is used to prevent potential periodical artifacts. In the experiments below, the frame size employed is $64\times 64$ pixels and all algorithms stop after 1000 iterations. 

Fig.~\ref{fig1-1} presents reconstruction results employing two different simulation datasets. The first experiment (second and fourth rows) uses data contaminated only by parasitic noise ($\phi^\star$). The second experiment (third and fifth rows) reconstructs data with a mix of white Gaussian  and Poisson noises as in Eq. \eqref{eq:synt} and  the outliers described above as well. The ground truth images for the sample, illumination and background are depicted in Figs.~\ref{fig1-1}(c), \ref{fig1-1}(k), Figs.~\ref{fig1-1}(a)-\ref{fig1-1}(b),  and Fig. \ref{fig1-2}(a), respectively. For each of the previous reconstruction experiments, we report the retrieved background in Fig.~\ref{fig1-2} for SHARP-B, ADP and ADPr. When no mix of noises and outliers are considered, ADP produces much higher quality reconstructions (Figs. \ref{fig1-1}(f) and \ref{fig1-1}(n)) than SHARP and SHARP-B (Figs. \ref{fig1-1}(d)-\ref{fig1-1}(e) and Figs. \ref{fig1-1}(l)-\ref{fig1-1}(m)). The retrieved background, for both SHARP-B and ADP (Figs.~\ref{fig1-2}(b)-\ref{fig1-2}(c)), is consistent with those results. When additional outliers are introduced,  ADP  still generates better result with a dramatic reduction in outliers-related artifacts (Figs. \ref{fig1-1}(i) and \ref{fig1-1}(q)). In this case, ADPr (Figs. \ref{fig1-1}(j) and \ref{fig1-1}(r)) helps further removing the noise from the baseline ADP reconstruction. For the second experiment, the proposed algorithms also retrieve the background (Figs. \ref{fig1-2}(e)-\ref{fig1-2}(f)) more precisely than SHARP-B (Fig.~\ref{fig1-2}(d)). {The SNRs of recovery results are reported:  $\mathrm{SNR}=13.2, 15.7, 76.7$ by SHARP, SHARP-B, and ADP  for the  case of single parasitic noise;  $\mathrm{SNR}=9.8, 10.2, 18.5, 27.9$ by SHARP, SHARP-B, ADP and ADPr  for the case with additional mixture noises and outliers. Again, one can readily  see the advantages of proposed ADP and ADPr.}

\begin{figure}
\vskip -.2in
\begin{center}
\subfigure{\includegraphics[width=.07\textwidth]{images/_colorbar}}
\addtocounter{subfigure}{-1}
\subfigure[True illum.(amp.)]{\includegraphics[width=.18\textwidth]{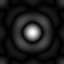}}\qquad
\subfigure[True illum.(phase) ]{\includegraphics[width=.18\textwidth]{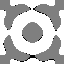}}\qquad
\subfigure{\includegraphics[width=.07\textwidth]{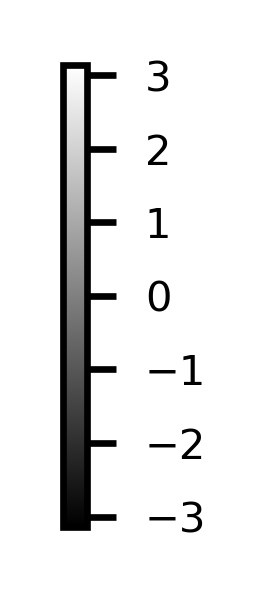}}
\addtocounter{subfigure}{-1}\\
\subfigure{\includegraphics[width=.05\textwidth]{images/_colorbar}}
\addtocounter{subfigure}{-1}
\subfigure[True image $(u^\star)$(amp.)]{\includegraphics[width=.23\textwidth]{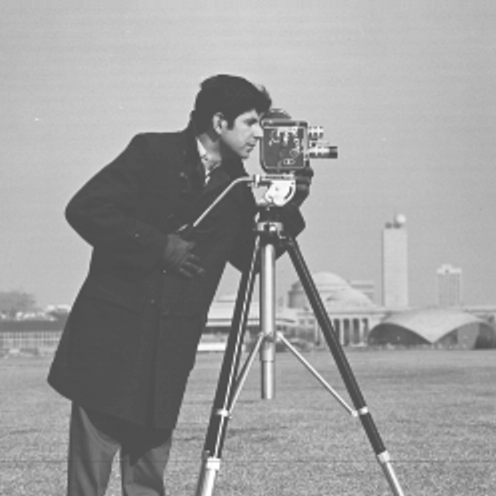}}
\subfigure[SHARP \textit{(amp.)}]    {\includegraphics[width=.23\textwidth]{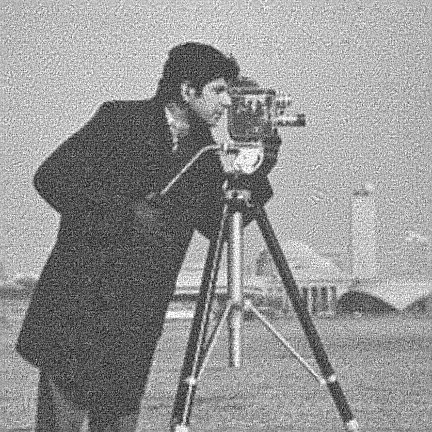}}
\subfigure[SHARP-B \textit{(amp.)}]    {\includegraphics[width=.23\textwidth]{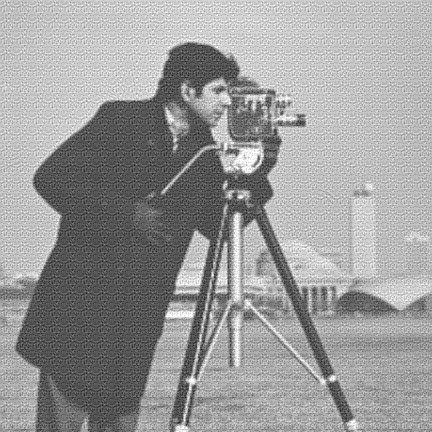}}
\subfigure[ADP \textit{(amp.)}]   {\includegraphics[width=.23\textwidth]{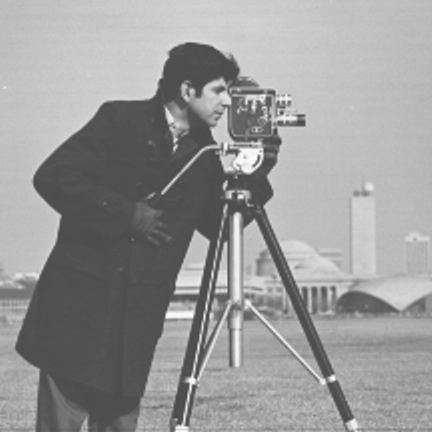}}~
\\
\hskip .8cm
\subfigure[SHARP \textit{(amp.)}]{\includegraphics[width=.23\textwidth]{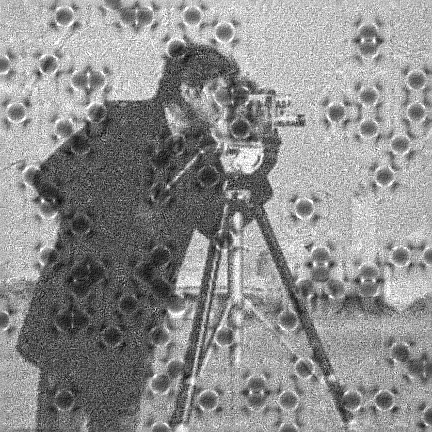}}
\subfigure[SHARP-B \textit{(amp.)}]{\includegraphics[width=.23\textwidth]{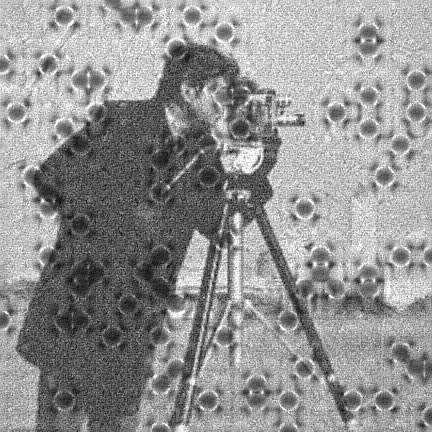}}
\subfigure[ADP \textit{(amp.)}]{\includegraphics[width=.23\textwidth]{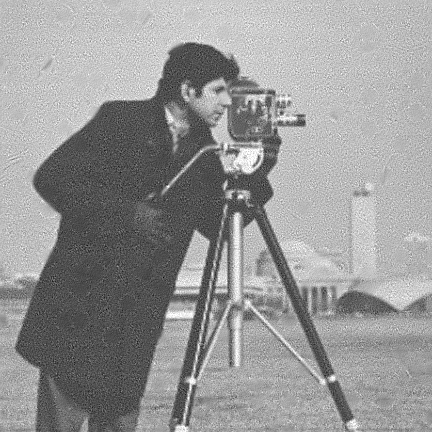}}
\subfigure[ADPr \textit{(amp.)}]{\includegraphics[width=.23\textwidth]{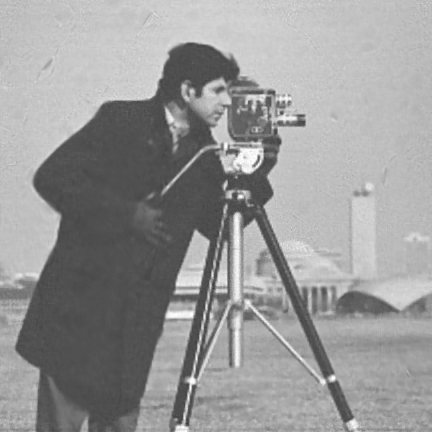}}
\\
\subfigure{\includegraphics[width=.05\textwidth]{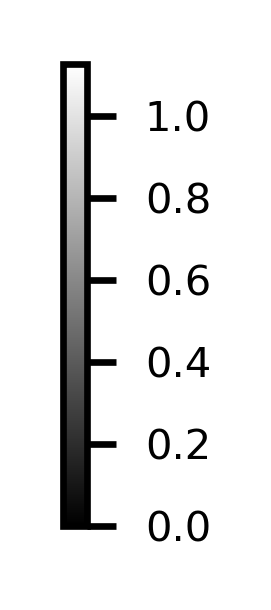}}
\addtocounter{subfigure}{-1}
\subfigure[True image $(u^\star)$(phase)]{\includegraphics[width=.23\textwidth]{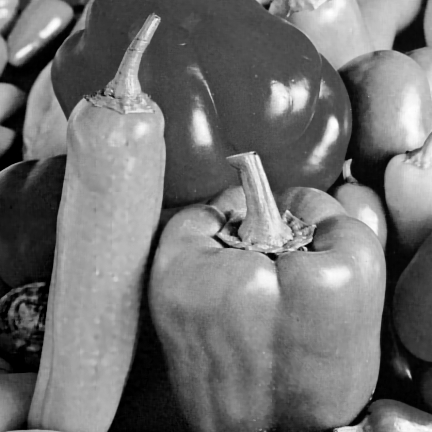}}
\subfigure[SHARP \textit{(phase)}]    {\includegraphics[width=.23\textwidth]{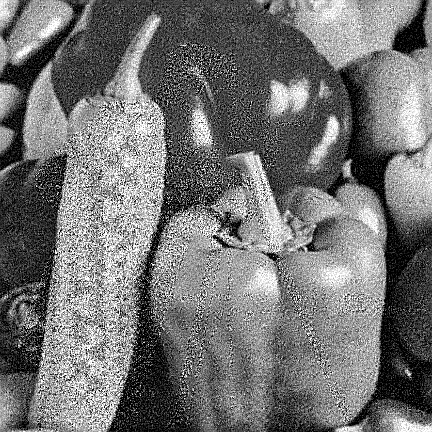}}
\subfigure[SHARP-B \textit{(phase)}]    {\includegraphics[width=.23\textwidth]{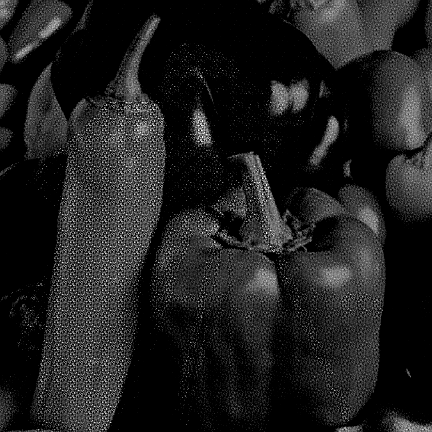}}
\subfigure[ADP \textit{(phase)}]   {\includegraphics[width=.23\textwidth]{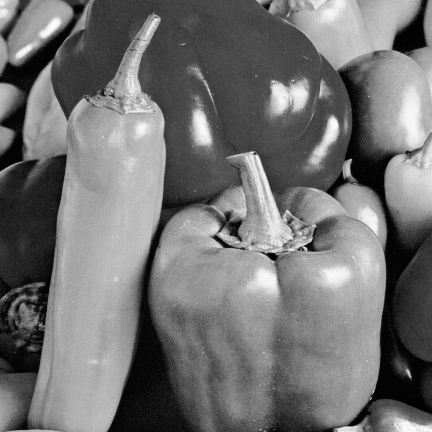}}~
\\
\hskip .8cm
\subfigure[SHARP \textit{(phase)}]{\includegraphics[width=.23\textwidth]{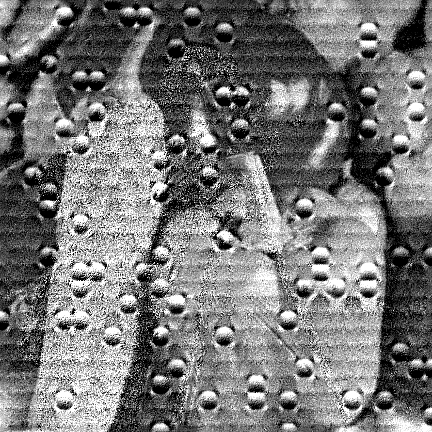}}
\subfigure[SHARP-B \textit{(phase)}]{\includegraphics[width=.23\textwidth]{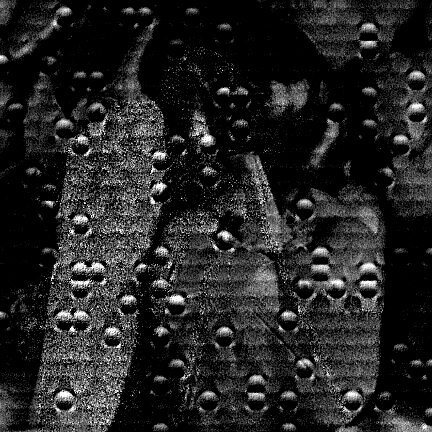}}
\subfigure[ADP \textit{(phase)}]{\includegraphics[width=.23\textwidth]{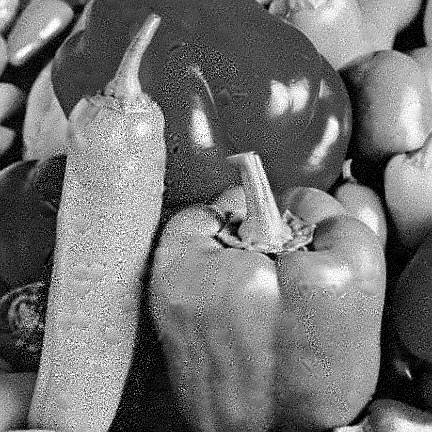}}
\subfigure[ADPr \textit{(phase)}]{\includegraphics[width=.23\textwidth]{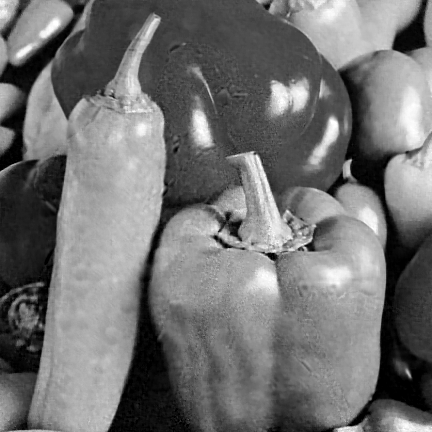}}
\\
\end{center}
\caption{Simulation experiment. First row: (a)-(b)  Amplitude and phase for true illumination (illu., $64\times 64$ pixels),  with Full Width at Half Max (FWHM) = 7 pixels. Second row: Amplitude of true image (c) with $496\times 496$ pixels, and amplitude of recovered images with data contaminated by only parasitic noise, using SHARP (d), SHARP-B (e) and ADP (f). Third row: Amplitude of recovered images with data contaminated as in~Eq. \eqref{eq:synt} and also by additional outliers, using SHARP (g), SHARP-B (h), ADP (i), and ADPr (j). The fourth, and fifth rows show the corresponding phase parts of the images in the second and third rows.}
\label{fig1-1}
\end{figure}

\begin{figure}
\begin{center}
\subfigure{\includegraphics[width=.05\textwidth]{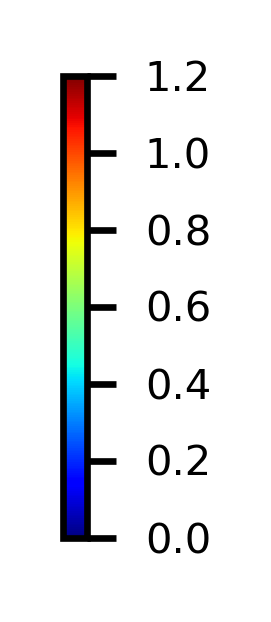}}
\addtocounter{subfigure}{-1}
\subfigure[True backg.]{\includegraphics[width=.16\textwidth]{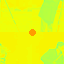}}
\subfigure[SHARP-B]{\includegraphics[width=.16\textwidth]{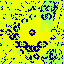}}
\subfigure[ADP]{\includegraphics[width=.16\textwidth]{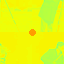}} \\
\hskip .8cm
\subfigure[SHARP-B]{\includegraphics[width=.16\textwidth]{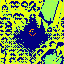}}
\subfigure[ADP]{\includegraphics[width=.16\textwidth]{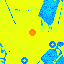}}
\subfigure[ADPr]{\includegraphics[width=.16\textwidth]{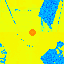}}\\
\end{center}
\caption{Retrieved backgrounds (backg.) for the simulation experiment of Fig.~\ref{fig1-1}. First row: True background (parasitic noise) (a) and results retrieved in a simulation only with parasitic noise for SHARP-B (b) and ADP (c). Second row: Background retrieved in a simulation as in~Eq. \eqref{eq:synt} with additional outliers for SHARP-B (d), ADP (e) and ADPr (f).{The figures are shown in scale $(\cdot)^{0.05}$ with the colorbar shown on the top left corner}}
\label{fig1-2}
\end{figure}

\subsection{Experimental data}

\paragraph{Soft X-ray dataset from the ALS, Lawrence Berkeley Lab}
The data used in the following experiment was published in \cite{yu2018three}. It corresponds to a 3D ptychography imaging experiment of battery cells at 708 eV, obtained from different projection angles.

The recovered amplitude and phase  are reported in Fig. \ref{fig4} and zoom-in view in Fig. \ref{fig4-zoom} when using SHARP, SHARP-B, ADP and ADPr. The recovered intensities of a single frame and the recovered background for the same experiment are shown in  Fig. \ref{fig4-1}.
When no background retrieval is used, the recovery results from SHARP (Figs. \ref{fig4}(a) and \ref{fig4}(e), and zoom-in view in Figs. \ref{fig4-zoom}(a) and \ref{fig4-zoom}(e) ) are significantly noisy, with some areas presenting vague or inappreciable features. SHARP-B was specially designed for structured noise from ALS soft X-ray sources and the results in Figs.~\ref{fig4}(b) and \ref{fig4}(f), and zoom-in view in Figs. \ref{fig4-zoom}(b) and \ref{fig4-zoom}(f),  show a considerable improvement with respect to SHARP baseline. Visually, ADP produces even sharper results and cleaner background than SHARP-B, specially when employing regularization (Figs.~\ref{fig4}-\ref{fig4-zoom} last two columns). This can also be appreciated by inspecting 
the recovered intensities and backgrounds from Fig~\ref{fig4-1}. Artifacts due to some bad frames and other source of noise can be clearly appreciated in Figs. \ref{fig4}(a)-\ref{fig4}(b), \ref{fig4}(e)-\ref{fig4}(f), and Figs. \ref{fig4-zoom}(a)-\ref{fig4-zoom}(b), \ref{fig4-zoom}(e)-\ref{fig4-zoom}(f) in the experiments with SHARP and SHARP-B. When using the proposed algorithms, such artifacts are greatly attenuated (in Figs. \ref{fig4}(c)-\ref{fig4}(d), and Figs. \ref{fig4-zoom}(c)-\ref{fig4-zoom}(d)). 

The R-factors of the previous experiment are 0.2399, 0.1112, 0.0957, and 0.0969, when using SHARP, SHARP-B, ADP and ADPr, respectively. This demonstrates how the proposed algorithm achieves smaller residuals thus producing higher accuracy results.
We also provide cutline values results from the previous experiment in Fig.~\ref{figCutline}, selecting a line from the center of the reconstructed image. It can be seen from the figure how the proposed algorithm generates a higher contrast reconstruction than SHARP-B, specially in the retrieved phase.
Convergence results are presented in Fig. \ref{figRf}, showing the R-factor as the iterations go. Those results illustrate how the R-factors of the proposed algorithm steadily decrease, providing additional evidence on the stability of the method. 

\begin{figure*}
\begin{center}
\subfigure{\includegraphics[width=.045\textwidth]{images/_colorbar}}
\hskip -.1in
\addtocounter{subfigure}{-1}
\subfigure[SHARP]{\includegraphics[width=.23\textwidth]  {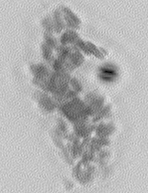}}
\subfigure[SHARP-B]{\includegraphics[width=.23\textwidth]{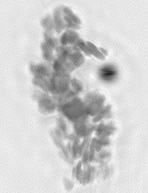}}
\subfigure[ADP]{\includegraphics[width=.23\textwidth]   {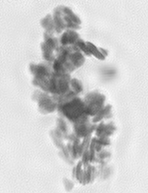}}
\subfigure[ADPr]{\includegraphics[width=.23\textwidth]  {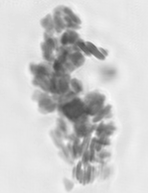}}
\\
\subfigure{\includegraphics[width=.045\textwidth]{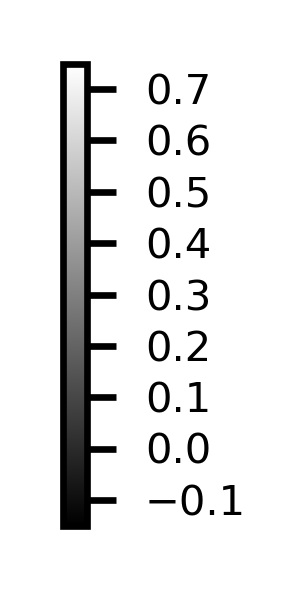}}
\addtocounter{subfigure}{-1}
\hskip -.1in
\subfigure[SHARP]{\includegraphics[width=.23\textwidth]  {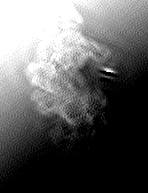}}
\subfigure[SHARP-B]{\includegraphics[width=.23\textwidth]{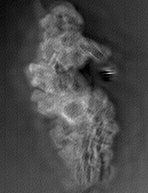}}
\subfigure[ADP]{\includegraphics[width=.23\textwidth]   {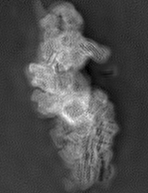}}
\subfigure[ADPr]{\includegraphics[width=.23\textwidth]  {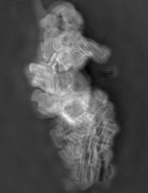}}
\end{center}
\caption{Soft X-ray experimental results from the data presented in~\cite{yu2018three}. First row: reconstructed amplitude using SHARP (a), SHARP-B (b), ADP (c) and ADPr (d). Second row:
 reconstructed phase using SHARP (e), SHARP-B (f), ADP (g) and ADPr (h).}
\label{fig4}
\end{figure*}

\begin{figure*}
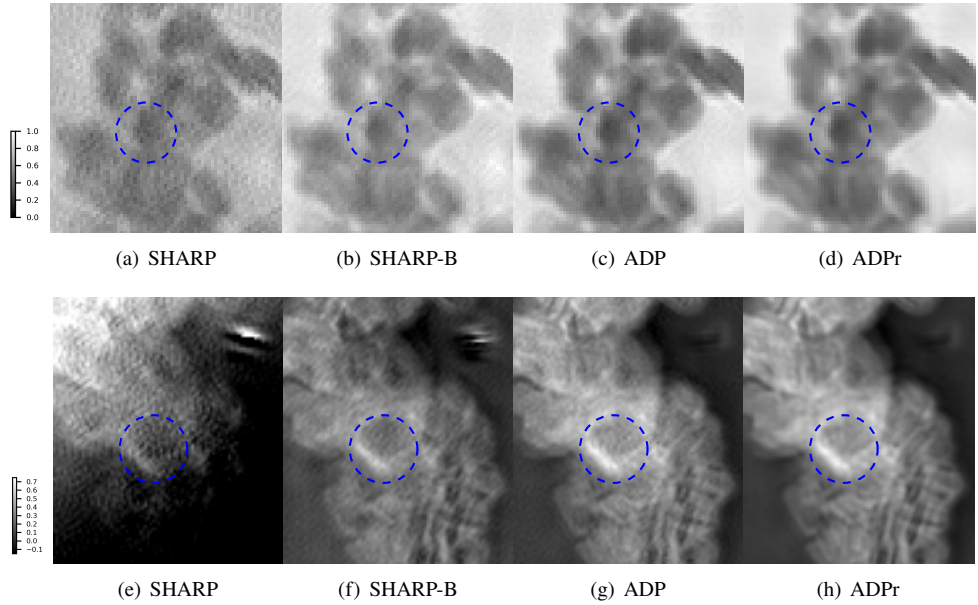

\begin{center}
\subfigure{\includegraphics[width=.05\textwidth]{images/_colorbar}}
\hskip -.1in
\addtocounter{subfigure}{-1}
          \subfigure[SHARP]
          {
     \begin{overpic}[width=.23\textwidth,trim={1cm 3.4cm 1.8cm 1cm},clip]{images/eg4/eg4-1501300570431M_obj}
     \put(25,30){
     \tikz\draw[blue,thick,dashed] (0,0) circle (.4);
     }
     \end{overpic}
     }
          \hskip -.1in
          \subfigure[SHARP-B]
          {
     \begin{overpic}[width=.23\textwidth,trim={1cm 3.4cm 1.8cm 1cm},clip]{images/eg4/eg4-150130057043Sharp_obj}
     \put(25,30){
     \tikz\draw[blue,thick,dashed] (0,0) circle (.4);
     }
     \end{overpic}
     }
          \hskip -.1in
          \subfigure[ADP]
          {
     \begin{overpic}[width=.23\textwidth,trim={1cm 3.4cm 1.8cm 1cm},clip]{images/eg4/eg4-150130057043100_obj}
     \put(25,30){
     \tikz\draw[blue,thick,dashed] (0,0) circle (.4);
     }
     \end{overpic}
     }
          \hskip -.1in
          \subfigure[ADPr]
          {
     \begin{overpic}[width=.23\textwidth,trim={1cm 3.4cm 1.8cm 1cm},clip]{images/eg4/eg4-150130057043400_obj}
     \put(25,30){
     \tikz\draw[blue,thick,dashed] (0,0) circle (.4);
     }
     \end{overpic}
     }
          \\
\subfigure{\includegraphics[width=.05\textwidth]{images/eg4/eg4-150130057043_p_colorbar}}
\hskip -.1in
\addtocounter{subfigure}{-1}
            \subfigure[SHARP]
          {
     \begin{overpic}[width=.23\textwidth,trim={1.2cm 1.2cm .9cm 2.0cm},clip]{images/eg4/eg4-1501300570431M_obj_p}
     \put(22,30){
     \tikz\draw[blue,thick,dashed] (0,0) circle (.45);
     }
     \end{overpic}
     }
     \hskip -.1in
          \subfigure[SHARP-B]
          {
     \begin{overpic}[width=.23\textwidth,trim={1.2cm 1.2cm .9cm 2.0cm},clip]{images/eg4/eg4-150130057043Sharp_obj_p}
     \put(22,30){
     \tikz\draw[blue,thick,dashed] (0,0) circle (.45);
     }
     \end{overpic}
     }
          \hskip -.1in
          \subfigure[ADP]
          {
     \begin{overpic}[width=.23\textwidth,trim={1.2cm 1.2cm .9cm 2.0cm},clip]{images/eg4/eg4-150130057043100_obj_p}
     \put(22,30){
     \tikz\draw[blue,thick,dashed] (0,0) circle (.45);
     }
     \end{overpic}
     }
          \hskip -.1in
          \subfigure[ADPr]
          {
     \begin{overpic}[width=.23\textwidth,trim={1.2cm 1.2cm .9cm 2.0cm},clip]{images/eg4/eg4-150130057043400_obj_p}
     \put(22,30)
     {
     \tikz\draw[blue,thick,dashed] (0,0) circle (.45);
     }
     \end{overpic}
     }
\end{center}
\caption{Zoom-in view of parts of Fig. \ref{fig4}. First row: reconstructed amplitude using SHARP (a), SHARP-B (b), ADP (c) and ADPr (d). Second row:
 reconstructed phase using SHARP (e), SHARP-B (f), ADP (g) and ADPr (h).}
\label{fig4-zoom}
\end{figure*}

\begin{figure}
\begin{center}
\subfigure[SHARP]{\includegraphics[width=.2\textwidth,trim={.8cm .6cm .6cm .8cm},clip]       {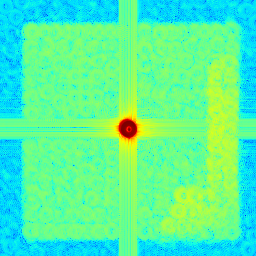}}
\subfigure[SHARP-B]{\includegraphics[width=.2\textwidth,trim={.8cm .6cm .6cm .8cm},clip]{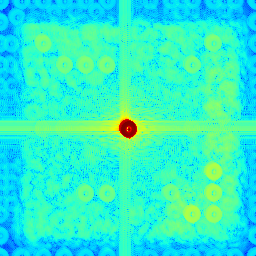}}
\subfigure[ADP]{\includegraphics[width=.2\textwidth,trim={.8cm .6cm .6cm .8cm},clip]{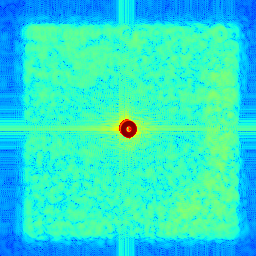}}
\subfigure[ADPr ]{\includegraphics[width=.2\textwidth,trim={.8cm .6cm .6cm .8cm},clip]{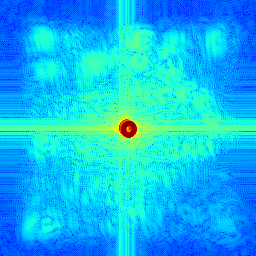}}\\ 
\hskip 1cm
\subfigure{\includegraphics[width=.06\textwidth]{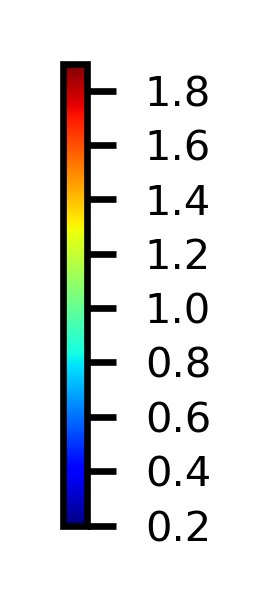}}
\addtocounter{subfigure}{-1}
\hskip .8cm
              \subfigure[SHARP-B]{\includegraphics[width=.2\textwidth,trim={.8cm .6cm .6cm .8cm},clip]{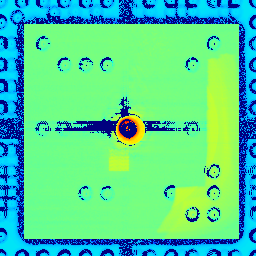}}
            \subfigure[ADP]{\includegraphics[width=.2\textwidth,trim={.8cm .6cm .6cm .8cm},clip]{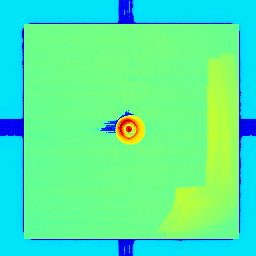}}
            \subfigure[ADPr] {\includegraphics[width=.2\textwidth,trim={.8cm .6cm .6cm .8cm},clip]{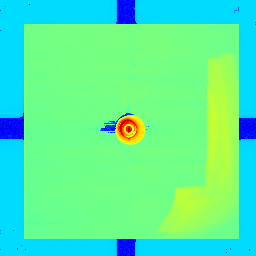}}
\end{center}
\caption{Recovered intensities and background from the experiment presented in Fig.~\ref{fig4}. First row: Recovered intensities by SHARP (a) , SHARP-B (b), ADP (c), and ADPr (d). Second row: recovered backgrounds by SHARP-B (e), ADP (f), and ADPr (g).  {The figures are shown in scale $(\cdot)^{0.05}$ with the colorbar shown on the down left corner}.}
\label{fig4-1}
\end{figure}


\begin{figure}
\label{fig6}
\begin{center}
\subfigure[R-factor]{\includegraphics[width=.45\textwidth]{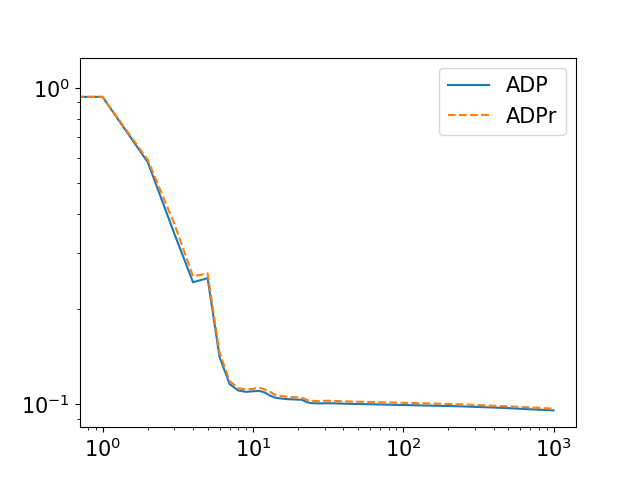}}
\end{center}
\caption{Convergence histories of the R-factor for the proposed algorithm (both ADP and ADPr) when performing the experiment reported in Fig.~\ref{fig4}, using the dataset from~\cite{yu2018three}.}
\label{figRf}   
\end{figure}

\begin{figure}[]
\begin{center}
                                   \subfigure[Amp.] {\includegraphics[width=.45\textwidth,height=.25\textwidth]           {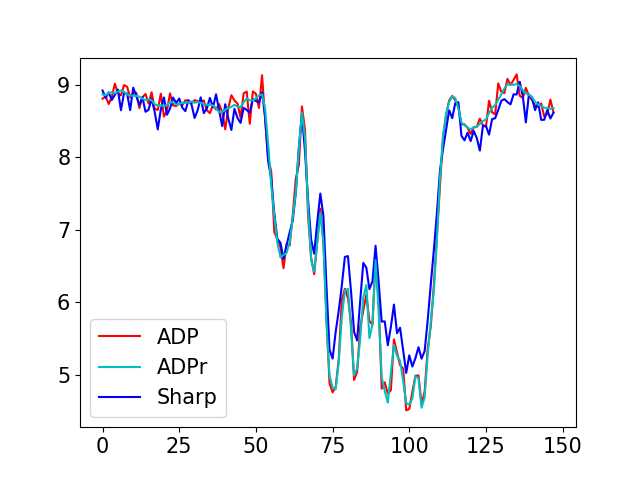}}
                       \subfigure[Phase] {\includegraphics[width=.45\textwidth,height=.25\textwidth]{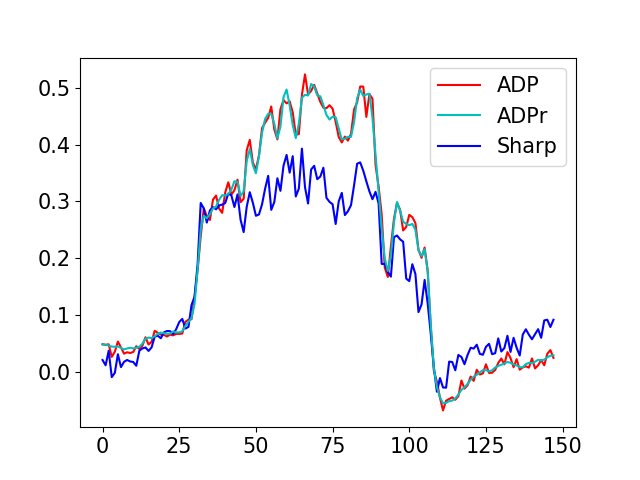}}
\end{center}                     
\caption{Cutlines of the retrieved amplitude (amp.) (a) and phase part (b) from the center of the reconstruction reported in Fig.~\ref{fig4}, using the dataset from~\cite{yu2018three}.}
\label{figCutline} 
\end{figure}


\paragraph{Hard X-ray dataset from PETRA III at DESY}
We also report experimental results using a dataset measured at beamline P06 at PETRA III,  DESY \cite{reinhardt2017beamstop} at a photon energy of  11919 eV. This dataset consists of 2 ptychographic measurements, one without and one with beamstop.  The following results are generated using (1) the data without beamstop (noBS), (2) the data with beamstop (BS), and (3) the merged data from BS and noBS (the low frequencies from noBS plus the high frequency from BS). The datasets are reconstructed using ePIE (extended Ptychographic Iterative Engine \cite{maiden2009improved}), which corresponds to the results presented in~\cite{reinhardt2017beamstop}, and with the proposed ADP algorithm.
For  a fair comparison, all results are produced without position retrieval.  
The amount of noise of this dataset is significantly higher than the one reported in the previous experiments: in this case, the phase contrast produced by the hard X-ray beam  is much weaker than in the soft X-ray dataset. 
Because of this, more iterations are required to achieve reasonable reconstructions. The following results use at most 2000 iterations (we use early-stop for ePIE since it will blow up finally) for both algorithms.

The reconstruction results are shown in Fig. \ref{fig5-1}. The reconstructed phase parts using ePIE is significantly noisy, with lots of blurred out features (Fig. \ref{fig5-1} (a) and (b)). Visually, ADP generates a much cleaner reconstructed phase parts with much better defined features (especially noticeable in the features around the color circles). ADPr in this case produces similar quality results as ADP with no regularization.

We also report an additional reconstruction result using only the BS dataset with ADP (Fig.~\ref{fig5-1} (e)). It is important to note that for this dataset low-frequency information is almost completely lost. We can see how very sharp features are well recovered in the reconstructed phase, while producing a very clean background. This illustrates the robustness of the proposed algorithm even when the measured data is incomplete or contaminated by heavy noise. 
Some features from the BS experiment with ADP seem to be  enhanced, compared with the merged data (specially noticeable again in the color circle areas). We remark that to make our algorithm work well, a good initialization by the illumination from the non-beamstop dataset is adopted. As a future work, we will explore more efficient strategy for the initialization.

\begin{figure}
\begin{center}
\subfigure[ePIE \textit{(noBS)}]{
     \begin{overpic}[width=.3\textwidth,trim={2.9cm 2.9cm 2.9cm 2.9cm},clip]   {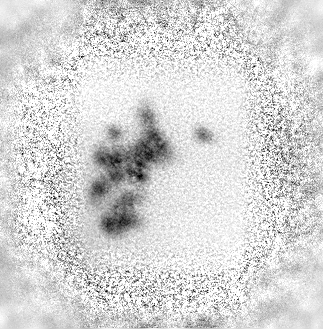}
     \put(2,48){
     \tikz\draw[red,thick,dashed] (0,0) circle (.3);
     }
     \put(65,60){
     \tikz\draw[green,thick,dashed] (0,0) circle (.35);
     }
          \put(30,70){
     \tikz\draw[blue,thick,dashed] (0,0) circle (.42);
     }
     \end{overpic}
}
\hskip -.2cm
\subfigure[ePIE \textit{(merged)}]{
     \begin{overpic}
     [width=.3\textwidth,trim={2.9cm 2.9cm 2.9cm 2.9cm},clip]{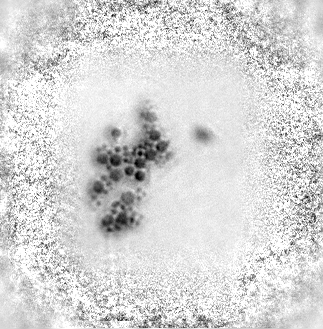}
     \put(2,48){
     \tikz\draw[red,thick,dashed] (0,0) circle (.3);
     }
          \put(65,60){
     \tikz\draw[green,thick,dashed] (0,0) circle (.35);
     }
          \put(30,70){
     \tikz\draw[blue,thick,dashed] (0,0) circle (.42);
     }
     \end{overpic}
}
\subfigure{\includegraphics[width=.1\textwidth]{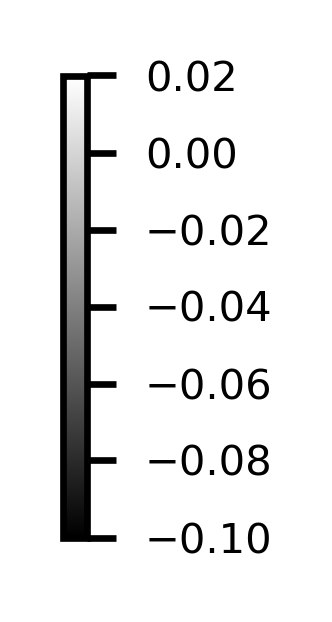}}
\addtocounter{subfigure}{-1}
~\hskip .25\textwidth~
\\
\hskip -1cm
\subfigure[ADP \textit{(noBS)}]{
     \begin{overpic}
     [width=.3\textwidth,trim={2.9cm 2.9cm 2.9cm 2.9cm},clip]   {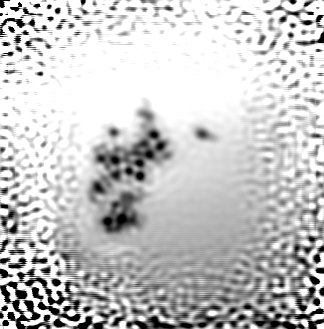}
     \put(2,48){
     \tikz\draw[red,thick,dashed] (0,0) circle (.3);
     }
     \put(65,60){
     \tikz\draw[green,thick,dashed] (0,0) circle (.35);
     }
          \put(30,70){
     \tikz\draw[blue,thick,dashed] (0,0) circle (.42);
     }
     \end{overpic}
}
\hskip -.2cm
\subfigure[ADP \textit{(merged)}]{
     \begin{overpic}
     [width=.3\textwidth,trim={2.95cm 2.9cm 2.85cm 2.9cm},clip]   {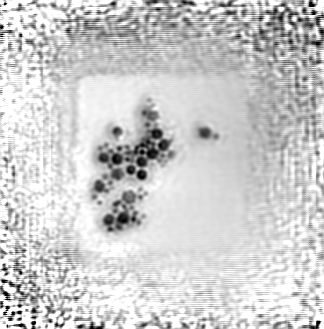}
     \put(2,48){
     \tikz\draw[red,thick,dashed] (0,0) circle (.3);
     }
     \put(65,60){
     \tikz\draw[green,thick,dashed] (0,0) circle (.35);
     }
          \put(30,70){
     \tikz\draw[blue,thick,dashed] (0,0) circle (.42);
     }
     \end{overpic}
}
\hskip -.2cm
\subfigure[ADP \textit{(BS)}]{
     \begin{overpic}
     [width=.3\textwidth,trim={2.8cm 3.05cm 3.0cm 2.75cm},clip]   {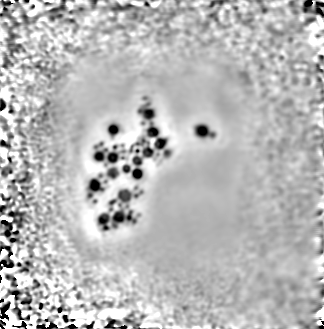}
     \put(2,48){
     \tikz\draw[red,thick,dashed] (0,0) circle (.3);
     }
     \put(65,60){
     \tikz\draw[green,thick,dashed] (0,0) circle (.35);
     }
          \put(30,70){
     \tikz\draw[blue,thick,dashed] (0,0) circle (.42);
     }
     \end{overpic}
}
\subfigure{\includegraphics[width=.1\textwidth]{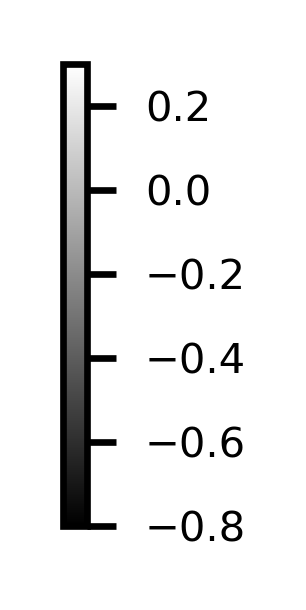}}
\addtocounter{subfigure}{-1}
\end{center}
\caption{Hard X-ray experimental results from the data presented in~\cite{reinhardt2017beamstop}. First row: recovered phase using ePIE, with the noBS dataset (a), and with the merged dataset (b). Second row: recovered phase using ADP, with noBS data (c), merged data (d), and BS data (e).{Results from (a) to (d) are displayed in the range [-0.1,0.02] (colorbar shown in the top right corner) while (e) is depicted in the range [-0.8,0.3] (colorbar shown in the down right corner).}} 
\label{fig5-1}
\end{figure}

{Since our proposed algorithms are implemented by python, which is not specially optimized compared with SHARP (implemented by highly optimized CUDA code). Therefore,  we only give the estimate of computational complexity of proposed ADP as about $63m+2J\times\mathrm{FFT}$ per iteration, while about $78m+2J\times\mathrm{FFT}$ for SHARP-B, that demonstrates the  proposed ADP needs a bit less computational cost than SHARP-B, where $\mathrm{FFT}$ means the computational complexity of 2D fast Fourier Transform for the matrix with size of $\sqrt{\bar m}\times \sqrt{\bar m}$ (same size of the illumination). }

\section{Conclusion}

This paper proposes a novel algorithm for (blind) ptychography reconstruction with integral experimental denoising. All main experimental sources of noise are addressed and characterized in the proposed solution as a mixture of (1) structured parasitic noise, (2) a mix of random noise sources (both Gaussian or Poison), and (3) data outliers. The algorithm is formulated to be robust and efficient, requiring few arithmetic operations and being inherently parallel in most of the steps. The convergence of the method is also proved under mild conditions. Experimental results analyze a variety of datasets with different experimental noise sources and demonstrate how the proposed algorithm achieves superior reconstruction results and denoising than state-of-the-art solutions. As a future work, we will consider partial coherence~\cite{chang2018partially} and position retrieval \cite{marchesini2013augmented} in the model, and also explore additional sparsity techniques~\cite{chang2016general}, deep-learning~\cite{cherukara2018real}, and dictionary-learning methods \cite{chang2018denoising} to further enhance the reconstruction results. 

\section*{Funding}
National Natural Science Foundation of China (Nos.11871372, 11501413, 11271049, 11671002); Natural Science Foundation of Tianjin (No.18JCYBJC16600); 2017-Outstanding Young Innovation Team Cultivation Program (No.043-135202TD1703); Innovation Project (No.043-135202XC1605) of Tianjin Normal University; Tianjin Young Backbone of Innovative Personnel Training Program and Program for Innovative Research Team in Universities of Tianjin (No.TD13-5078); the Center for Applied Mathematics for Energy Research Applications, a joint ASCR-BES funded project within the Office of Science, US Department of Energy, and used resources of the Advanced Light Source, which is a DOE Office of Science User Facility under contract no. DE-AC02-05CH11231; STROBE: A National Science Foundation Science \& Technology Center under Grant No. DMR 1548924.
Parts of this research were carried out at beamline P06 of the light source PETRA III at DESY, a member of the Helmholtz Association (HGF); German Ministry of Education and Research (BMBF) under grant numbers 05K13VK2/05K13OD4 and 05K10VK1/ 05K10OD1, and by VH-VI-403 of the Impuls-und Vernetzungsfonds (IVF) of the Helmholtz Association of German Research Centres;  The Direct Grant of The Chinese University of Hong Kong.




\appendix
\paragraph{\Large Appendix}

\section{Revisit  of SHARP background retrieval algorithm}\label{apdx-1}

The following description was proposed in \cite{marchesini2013augmented}. Given an intensity $I_j$ and for the $k^{th}$ iteration, the background $\phi$ is retrieved as follows:

\begin{equation}
\label{lsBG}
\begin{split}
&\min_{\phi,\eta\geq \eta_{-}} \mathcal H(\eta,\phi):=\tfrac12\textstyle\sum_j\|\eta \circ(I_j-\phi)-|z_j^k|^2\|^2,\\
\end{split}
\end{equation}
given $z^k,$ where a weight function $1\geq \eta\in\mathbb R^{\bar m}_+$ crossing all frames is introduced to be optimized.
By alternating minimization:
\begin{equation}
\eta^{k}=\arg\min_{\eta\geq \eta_-}\textstyle\sum_j\|\eta\circ(I_j-\phi^{k-1})-|z_j^k|^2\|^2=\max\left\{\eta_-, \tfrac{\textstyle\sum_j \langle |z_j^k|^2, I_j-\phi^{k-1}\rangle }{\textstyle\sum_j |I_j-\phi^{k-1}|^2}\right\}.
\end{equation}

Then, by the preconditioned  gradient descent for the background $\phi$:
\begin{equation}
\label{eqeta}
\begin{split}
\phi^{k}&=\phi^{k-1}-\tfrac{1}{J|\eta^k|^2}\nabla_\phi \mathcal H(\eta^k,\phi)\\
&=\tfrac{1}{J}\textstyle\sum_j(I_j-{|z^k_j|^2})+(1-\tfrac{1}{\eta^k})\circ\big(\tfrac{1}{J}\textstyle\sum_j |z_j^k|^2\big),
\end{split}
\end{equation}
where the first term  is essentially the closed form solution for the least squares problem Eq. \eqref{lsBG} with respect to  $\phi$.
When the algorithm converges, $\eta^k\rightarrow 1$ if $k\rightarrow +\infty,$ and Eq. \eqref{eqeta} is a gradient descent scheme with additional stabilization term
$(1-\tfrac{1}{\eta^k})\circ\big(\tfrac{1}{J}\textstyle\sum_j |z_j^k|^2\big),$ and tends to zeros as $k\rightarrow 0$, which  helps to speed up the evolution of background retrieval, heuristically.

\section{Derivations for ADP (Algorithm 1) and ADPr (Algorithm 2)}\label{apdx-0}

\paragraph{ADP}
 The computation of each subproblem is presented here. For simplicity, we omit the superscripts.
For the $\omega$ and $u-$subproblems, with additional proximal terms, we have:
\begin{equation}
\begin{split}
\omega^{opt}&:=\arg\min_{\omega}\mathcal L(\omega, u,z,\mu,\tilde\phi,\Lambda_1,\Lambda_2)+\tfrac{\alpha_1}{2}\|\omega-\hat\omega\|^2\\
 &=\arg\min_\omega  \tfrac12\textstyle\sum\nolimits_j\| \omega\circ\mathcal S_j u-\mathcal F^*(z_j+\Lambda_{1,j}) \|^2+\tfrac{\alpha_1}{2}\|\omega-\hat\omega\|^2,
\end{split}
\end{equation}
and
\begin{equation}
\begin{split}
u^{opt}&:=\arg\min_{u}\mathcal L(\omega, u,z,\mu,\tilde\phi;\Lambda_1,\Lambda_2)+\tfrac{\alpha_2}{2}\|u-\hat u\|^2\\
&=\arg\min_u  \tfrac12\textstyle\sum\nolimits_j\| \omega\circ\mathcal S_j u-\mathcal F^*(z_j+\Lambda_{1,j}) \|^2+\tfrac{\alpha_2}{2}\|u-\hat u\|^2,
\end{split}
\end{equation}
where $\hat\omega$ and $\hat u$ are the approximate solutions in the previous iterations.  By solving a least squares problem, we have:
\begin{equation}
\left\{
\begin{aligned}
&\omega^{opt}=\tfrac{\textstyle\sum_j\mathcal S_j u^*\circ \mathcal F^*(z_j+\Lambda_{1,j})+\alpha_1 \hat \omega}{\textstyle\sum_j |\mathcal S_j u|^2+\alpha_1\bm 1_{\bar m}}
\\
&u^{opt}=\tfrac{\textstyle\sum_j\mathcal S_j^T(\omega^*\circ\mathcal F^*(z_j+\Lambda_{1,j}))+\alpha_2 \hat u}{\textstyle\sum_j\mathcal S_j^T|\omega|^2+\alpha_2\bm 1_{n}}.
\end{aligned}
\right.
\end{equation}

The variables $z_j$ and $\mu_j$ can be determined jointly by solving:
\begin{equation}\label{eqSubOptZ}
\begin{split}
&\min_{z_j,\mu_j} \mathcal G_{\varepsilon,j}(z_j,\mu_j)+\tfrac{r}{2}\|z_j-(\mathcal A_j(\omega,u)-\Lambda_{1,j})\|^2+\tfrac{r}{2}\|\mu_j-(\tilde \phi-\Lambda_{2,j})\|^2,
\end{split}
\end{equation}
with 
\begin{equation}\mathcal G_{\varepsilon,j}(z_j,\mu_j):=\tfrac{1}{2} \big\|\sqrt{C_j}\circ(\sqrt{|z_j|^2+\mu_j^2+\varepsilon^2\bm 1_{\bar m}}- \sqrt{I_j+\varepsilon^2\bm 1_{\bar m}} )\big\|^2
.\end{equation}
By denoting $X_j=(z_j^T,\mu_j^T)^T\in\mathbb C^{2\bar m},$
we have:
\begin{equation}\label{eqSubZ}
X_j^{opt}:=\arg\min\mathcal G_{\varepsilon,j}(X_j)+\tfrac{r}{2}\|X_j-X^0_j\|^2,
\end{equation}
with $X_j^0:=( \mathcal A_j^T(\omega,u)-\Lambda^T_{1,j}, \tilde \phi^T-\Lambda_{2,j}^T)^T.
$
The solution to the above problem has the following forms:
\begin{equation}
X_j^{opt}=((\rho_j^{opt})^T,(\rho_j^{opt})^T)^T\circ \mathrm{sign}(X_j^0),
\end{equation}
with \begin{equation}
\mathrm{sign}(X_j^0):=\left(\tfrac{\mathcal A_j^T(\omega,u)-\Lambda^T_{1,j}} {|X_j^0|^T_*},
\tfrac{\tilde \phi^T-\Lambda^T_{2,j}} {|X_j^0|^T_*}\right)^T,\end{equation} \begin{equation}|X^0_j|_*:= \sqrt{|\mathcal A_j(\omega,u)-\Lambda_{1,j}|^2+ |\tilde \phi-\Lambda_{2,j}|^2},\end{equation}  where
$\rho_j^{opt}$ is determined by:
\begin{equation}
\label{eqRho}
\begin{split}
&\rho_j^{opt}=\arg\min_{\rho_j\in\mathbb R_+^{\bar m}} \mathcal  H_\varepsilon(\rho_j):=
\tfrac12 \langle C_j, \rho_j^2-(I_j+\varepsilon^2 \bm 1_{\bar m})\circ\log(\rho_j^2+\varepsilon^2 \bm 1_{\bar m})  \rangle
 +\tfrac{r}{2}\big\|\rho_j-|X^0_j|_*\big\|^2.
\end{split}
\end{equation}

If $\varepsilon=0,$  the closed form solution \cite{chang2016phase}  to Eq. \eqref{eqRho} is given by:
\begin{equation}
\rho_j^{opt}=\tfrac{r|X^0_j|_*+\sqrt{r^2|X_j^0|_*^2+4(C_j+r\bm 1_{\bar m})\circ C_j\circ I_j}}{2(C_j+r\bm 1_{\bar m})},
\end{equation}
 such that
\begin{equation}\label{eqSolverZ}
X_j^{opt}=\tfrac{r|X^0_j|_*+\sqrt{r^2|X_j^0|_*^2+4(C_j+r\bm 1_{\bar m})\circ C_j\circ I_j}}{2(C_j+r\bm 1_{\bar m})}
\circ \mathrm{sign}(X_j^0).
\end{equation}
Otherwise, we use the projection gradient algorithm to solve it, which gives:
\begin{equation}
\label{eqRho-PG}
\begin{split}
\rho_{j, l+1}
&=\max\{0, \rho_{j,l}-\tau \nabla \mathcal H_\epsilon(\rho_{j,l}) \}\\
&=\max\Big\{0, \Big((1-\tau r)\bm 1_{\bar m}-\tau C_j+\tau C_j\circ\tfrac{{I_j+\varepsilon^2\bm 1_{\bar m}}}{{\rho_{j,l}^2+\varepsilon^2\bm 1_{\bar m}}}\Big)\circ\rho_{j,l}+\tau r |X^0_j|_*\Big\},
\end{split}
\end{equation}
$l=0,1,\cdots,$ with stepsize $\tau$,
since
\begin{equation}
\nabla \mathcal H_\epsilon(\rho_{j})=
\Big(C_j+r\bm 1_{\bar m}-C_j\circ\tfrac{{I_j+\varepsilon^2\bm 1_{\bar m}}}{{\rho_j^2+\varepsilon^2\bm 1_{\bar m}}}\Big)\circ\rho_j-r|X^0_j|_*.
\end{equation}





For the $\tilde\phi-$subproblem, the solution is given directly as
\begin{equation}
\tilde\phi^{opt}=\arg\min_{\tilde\phi}\mathcal L(\omega, u,z,\mu,\tilde\phi,\Lambda_1,\Lambda_2)=\tfrac{1}{J}\sumMy(\mu_{j}+\Lambda_{2,j}).
\end{equation}

We can now further simplify the proposed algorithm. Since
$0=\tilde\phi^{k+1}-\tfrac{1}{J}\textstyle\sum\nolimits_j (\mu^{k+1}_j+\Lambda^k_{2,j}),$ and
$\Lambda^{k+1}_{2,j}=\Lambda^k_{2,j}+\mu^{k+1}_j-\tilde \phi^{k+1}
$, we have
$
\textstyle\sum\nolimits_j  \Lambda^{k+1}_{2,j}=0,
$
such that
$
\tilde\phi^{k+1}=\tfrac{1}{J}\textstyle\sum\nolimits_j \mu^{k+1}_j.
$
That is to say, one can replace the variable $\tilde \phi^k$ by the pointwise average $\tfrac{1}{J}\sumMy_j \mu_j^k$ and remove the subproblem with respect to $\tilde\phi$.

\paragraph{ADPr}\label{apdx0-1}
Similarly, we omit the superscripts for simplicity.
We consider the $u-$subproblem, which is expressed below:
\begin{equation}
\begin{split}
&\min \tfrac{r}{2}\textstyle\sum\nolimits_j\| \omega\circ\mathcal S_j u-\mathcal F^*(z_j+\Lambda_{1,j}) \|^2+\tfrac{\beta}{2}\textstyle\sum\nolimits_j \|\Lambda_{3,j}+p_j-\nabla \mathcal S_j u\|^2+\tfrac{\alpha_2}{2}\|u-\hat u\|^2,
\end{split}
\end{equation}
where $\hat u$ is the previous iterative solution.
The minimizer to the above optimization problem satisfies the following equation:
\begin{equation}\label{eqSubUReg}
\begin{split}
&\qquad\qquad\textstyle\sum\nolimits_j\big( \mathrm{diag}(r\mathcal S_j^T |\omega|^2+\alpha_2 \bm 1_{n})-\beta \mathcal S_j^T \Delta\mathcal S_j\big) u\\
&=\textstyle\sum\nolimits_j\big( r\mathcal S_j^T(\omega^*\circ \mathcal F^*(z_j+\Lambda_{1,j})) -\beta\mathcal S_j^T\mathrm{div}(p_j+\Lambda_{3,j})\big)+\alpha_2 \hat u,
\end{split}
\end{equation}
where $\mathrm{div}$ denotes the divergence operator satisfying $\mathrm{div}=-\nabla^T,$ and $\Delta$ denotes the Laplacian operators satifying $\Delta=\mathrm{div(\nabla)}$. One can readily verify that the coefficient matrix is Hermitian,
 and positive definite, i.e.:
 \begin{equation}
 \begin{split}
 &\langle\big(\mathrm{diag}(r\mathcal S_j^T |\omega|^2+\alpha_2\bm 1_{n})-\beta \mathcal S_j^T \Delta\mathcal S_j\big) v,v\rangle\\
=&
 r\langle \mathcal S_j^T \omega \circ v,\mathcal S_j^T \omega\circ v\rangle  +\alpha_2\|v\|^2+\beta\langle  \nabla\mathcal S_j v,\nabla\mathcal S_j v\rangle \\
 =&r\|\mathcal S_j^T \omega \circ v\|^2  +\alpha_2\|v\|^2+\beta\| \nabla\mathcal S_j v\|^2 > 0,~\forall 0\not=v\in\mathbb C^n,
 \end{split}
 \end{equation}
 such that we adopt conjugate gradient (CG) algorithm to solve Eq. \eqref{eqSubUReg}.

The $p_j-$subproblem has the following closed form solution:
\begin{equation}
p^{opt}_j=\max\{ 0, |\nabla\mathcal S_j u-\Lambda_{3,j}|-\tfrac{\lambda}{\beta}\bm 1_{\bar m} \}\circ \tfrac{\nabla\mathcal S_j u-\Lambda_{3,j}}{|\nabla\mathcal S_j u-\Lambda_{3,j}|}.
\end{equation}

The subproblems with respect to other variables are mainly the same as in the previous subsection, and we omit the details.

\section{Proof of Theorem 1}\label{apdx-2}
The relation of the stationary points is given in the following lemma:
\begin{lem}
Iterative solutions  $\{(\omega^k, u^{k}, z^{k},\mu^{k},\Lambda^{k}_1,\Lambda^{k}_2)\}_{k}$ satisfy the following relations:
\begin{equation}
\label{eqStationary}
\left\{
\begin{aligned}
&0=\omega^{k+1}\circ \textstyle\sum_j|\mathcal S_j u^k|^2-\textstyle\sum_j(\mathcal S_j u^k)^*\circ \mathcal F^*(z_j^k+\Lambda_{1,j}^k)+\alpha_1(\omega^{k+1}-\omega^k);\\
&0=\textstyle\sum\nolimits_j \mathcal S_j^T |\omega^{k+1}|^2\circ u^{k+1}-\textstyle\sum\nolimits_j \mathcal S_j^T ((\omega^{k+1})^*\circ\mathcal F^{*}(z_j^k+\Lambda^k_{1,j}))+\alpha_2(u^{k+1}-u^k);\\
&0=\nabla \mathcal G_\varepsilon(z^{k+1},\mu^{k+1})+r((\Lambda^{k+1}_1)^T,(\Lambda^{k+1}_2)^T)^T;\\
&0=-\Lambda^{k+1}_{1,j}+\Lambda^k_{1,j}+z^{k+1}_j-\mathcal A_j(\omega^{k+1},u^{k+1});\\
&0=-\Lambda^{k+1}_{2,j}+\Lambda^k_{2,j}+\mu^{k+1}_j-\tfrac{1}{J}\textstyle\sum\nolimits_j \mu^{k+1}_j.
\end{aligned}
\right.
\end{equation}
\end{lem}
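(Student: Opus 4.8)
The plan is to recognize each of the five relations in Eq.~\eqref{eqStationary} as a first-order (stationarity) condition of the corresponding subproblem of Algorithm~1, rewritten with the help of the dual ascent steps. Every primal subproblem is smooth---the only nonsmooth term, the indicator $\mathbb I_{\mathscr E}(\phi)$, has already been removed by the substitution $\phi=\tilde\phi^2$ in Eq.~\eqref{eqVarGenModel}---so at its minimizer the gradient of the augmented Lagrangian (plus the proximal term, where present) vanishes. The proof therefore reduces to writing these gradients down and matching terms.

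First I would establish the $\omega$- and $u$-relations. Using that $\mathcal F$ is unitary, I rewrite the coupling penalty as $\|z_j-\mathcal A_j(\omega,u)\|^2=\|\mathcal F^*z_j-\omega\circ\mathcal S_j u\|^2$ and likewise the multiplier term $\Re\langle\Lambda_{1,j},z_j-\mathcal A_j(\omega,u)\rangle$, so that the $\omega$-objective becomes $\tfrac{r}{2}\sum_j\|\omega\circ\mathcal S_j u^k-\mathcal F^*(z_j^k+\Lambda_{1,j}^k)\|^2+\tfrac{\alpha_1}{2}\|\omega-\omega^k\|^2$. Taking the Wirtinger derivative with respect to $\bar\omega$ and setting it to zero produces the first relation; repeating the computation in $u$, where differentiating through $\mathcal S_j u$ introduces the adjoint $\mathcal S_j^T$ and the coefficient $\mathcal S_j^T|\omega^{k+1}|^2$, produces the second. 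These two identities are exactly the boxed updates of Steps~1--2 cleared of their denominators, so this step simply reverses the least-squares solves of Appendix~\ref{apdx-0}.

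Next I would treat the block relation for $\nabla\mathcal G_\varepsilon$. Setting the gradient of the joint $(z,\mu)$-subproblem Eq.~\eqref{eqSubOptZ} to zero gives $\nabla_{z_j}\mathcal G_\varepsilon+r(z_j^{k+1}-\mathcal A_j(\omega^{k+1},u^{k+1})+\Lambda_{1,j}^k)=0$ and $\nabla_{\mu_j}\mathcal G_\varepsilon+r(\mu_j^{k+1}-\tilde\phi^k+\Lambda_{2,j}^k)=0$. The $z$-part closes immediately: the dual step $\Lambda_{1,j}^{k+1}=\Lambda_{1,j}^k+z_j^{k+1}-\mathcal A_j(\omega^{k+1},u^{k+1})$ identifies the bracketed residual with $\Lambda_{1,j}^{k+1}$, yielding the $z$-block of the third relation, while the two dual steps themselves are precisely the fourth and fifth relations transposed to the form $0=\cdots$, hence hold by construction.

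The main obstacle is the $\mu$-block of the third relation, because the $(z,\mu)$-subproblem is solved with the old iterate $\tilde\phi^k$ whereas the multiplier $\Lambda_2$ is updated with the new average $\tfrac1J\sum_j\mu_j^{k+1}$. To reconcile them I would invoke the elimination of $\tilde\phi$ derived in Appendix~\ref{apdx-0}: the choices $\mu^0=\tilde\phi^0=0$, $\Lambda_2^0=0$ together with the $\tilde\phi$-update force $\sum_j\Lambda_{2,j}^k=0$, and hence the consistency identity $\tilde\phi^k=\tfrac1J\sum_j\mu_j^k$ for every $k$. Substituting this into the $\mu$-optimality and into the $\Lambda_2$-update rewrites the residual as $\Lambda_{2,j}^{k+1}$ up to the successive difference $r(\tilde\phi^{k+1}-\tilde\phi^k)$; since Theorem~\ref{thm1} concerns limit points, where this difference vanishes, the $\mu$-block of Eq.~\eqref{eqStationary} holds there. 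The remaining work, the complex (Wirtinger) differentiation for the $\omega$- and $u$-relations, is routine but must be carried out with care for conjugates and for the adjoints $\mathcal F^*$ and $\mathcal S_j^T$.
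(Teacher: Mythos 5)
Your proposal follows essentially the same route as the paper's proof: each relation of Eq.~\eqref{eqStationary} is read off as the first-order optimality condition of the corresponding subproblem, the $\omega$- and $u$-relations via unitarity of $\mathcal F$ and (Wirtinger) differentiation of the resulting least-squares objectives, and the $z$-block and the last two relations via direct substitution of the dual updates. One difference in emphasis: the paper's only explicit step is the positivity bound \eqref{eqPos}, $\sqrt{|z_j^{k+1}|^2+|\mu_j^{k+1}|^2}\geq \sqrt{I_j}/(1+r)>0$, which is what makes the objective of \eqref{eqSubZ} differentiable at the minimizer --- the real smoothness concern is the log/modulus singularity at the origin when $\varepsilon=0$ (the case in which the closed form \eqref{eqSolverZ} is actually used), not the indicator $\mathbb I_{\mathscr E}$. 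Your smoothness argument (indicator removed by $\phi=\tilde\phi^2$, plus $\varepsilon>0$) is valid under the hypothesis of Theorem~\ref{thm1}, but it bypasses rather than reproduces this estimate; to cover the $\varepsilon=0$ iterates the algorithm computes in practice you would still need the positivity bound.

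Your treatment of the $\mu$-block is the more interesting point, and it is correct: it exposes a gap in the paper's own one-line proof rather than in yours. Since the $(z,\mu)$-subproblem is solved with the prox center $\tilde\phi^k-\Lambda_{2,j}^k$ while $\Lambda_{2,j}^{k+1}$ is formed with $\tilde\phi^{k+1}=\tfrac1J\sum_j\mu_j^{k+1}$, the exact optimality condition is
\begin{equation*}
0=\nabla_{\mu_j}\mathcal G_\varepsilon(z^{k+1},\mu^{k+1})+r\bigl(\Lambda_{2,j}^{k+1}+\tilde\phi^{k+1}-\tilde\phi^k\bigr),
\end{equation*}
so the third relation of the lemma, as literally stated, holds only up to the frame-independent error $r(\tilde\phi^{k+1}-\tilde\phi^k)$; it is exact for every $k$ only if successive background averages coincide. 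Your fix --- carrying this error and discharging it at limit points via \eqref{eqSucse} --- proves a weaker statement than the lemma claims, but it is precisely what the proof of Theorem~\ref{thm1} consumes, so nothing downstream is affected. The paper silently treats the relation as exact; your version is the defensible one.
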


\begin{proof}
 By  Eq. \eqref{eqSolverZ}, one has:
\begin{equation}\label{eqPos}
\sqrt{|z_j^{k+1}|^2+|\mu_j^{k+1}|^2}=\tfrac{\sqrt{I_j}+r|Y_j^k|}{1+r}\geq \tfrac{\sqrt{I_j}}{1+r}>0,
\end{equation}
such that the objective function in Eq. \eqref{eqSubZ} is smooth at $X_j^{k+1}:=(z_j^{k+1},\mu_j^{k+1})$.
%
Then one can readily derive Eq. \eqref{eqStationary} by calculating the first order derivative for each subproblem.
\end{proof}


\begin{lem}\label{lem0}

\begin{equation}
\|\nabla \mathcal G_\varepsilon(z^1,\mu^1)-\nabla \mathcal G_\varepsilon(z^2,\mu^2)\|\leq 
L_\varepsilon(\|z^1-z^2\|+\|\mu^1-\mu^2\|),
\end{equation}
where $L_\epsilon$ is a positive constant independent with $z^1,\mu^1,z^2,\mu^2.$
\end{lem}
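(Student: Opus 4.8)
The plan is to exploit the fact that $\mathcal G_\varepsilon$ splits as a sum over the frame index $j$ and the pixel index $t$, so that it suffices to control the second derivatives of each scalar summand uniformly and then add up the finitely many contributions. Concretely, each summand has the form $h(z,\mu)=\tfrac{c}{2}\big(|z|^2+\mu^2+\varepsilon^2-(I+\varepsilon^2)\log(|z|^2+\mu^2+\varepsilon^2)\big)$ with $c:=C_j(t)\ge 0$ and $I:=I_j(t)\ge 0$, where $z\in\mathbb C$ is identified with its real and imaginary parts. Introducing the real coordinate $w:=(\Re z,\Im z,\mu)\in\mathbb R^3$ and $s:=|w|^2+\varepsilon^2=|z|^2+\mu^2+\varepsilon^2\ge\varepsilon^2$, a direct computation gives the gradient $\nabla h=c\big(1-\tfrac{I+\varepsilon^2}{s}\big)w$ and the Hessian $\nabla^2 h=c\big(1-\tfrac{I+\varepsilon^2}{s}\big)\mathrm{Id}_3+\tfrac{2c(I+\varepsilon^2)}{s^2}\,ww^\top$, where $\mathrm{Id}_3$ is the $3\times3$ identity matrix.

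First I would establish a uniform bound on $\|\nabla^2 h\|$. Using $s\ge\varepsilon^2$ and $|w|^2=s-\varepsilon^2\le s$, the scalar prefactor obeys $\big|1-\tfrac{I+\varepsilon^2}{s}\big|\le 1+\tfrac{I+\varepsilon^2}{\varepsilon^2}$, while the rank-one term is controlled by $\tfrac{2(I+\varepsilon^2)}{s^2}|w|^2\le\tfrac{2(I+\varepsilon^2)}{s}\le\tfrac{2(I+\varepsilon^2)}{\varepsilon^2}$. Hence $\|\nabla^2 h\|$ is bounded by a constant depending only on $c$, $I$ and $\varepsilon$, but not on $(z,\mu)$. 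Since the data $\{C_j\}$ and $\{I_j\}$ are fixed and there are only finitely many frames and pixels, taking the maximum of these constants over all $j$ and $t$ yields a single $L_\varepsilon>0$, independent of the arguments, such that $\|\nabla^2\mathcal G_\varepsilon(z,\mu)\|\le L_\varepsilon$ everywhere. With this uniform Hessian bound in hand I would conclude by the mean value inequality: integrating $\nabla^2\mathcal G_\varepsilon$ along the segment joining $(z^1,\mu^1)$ and $(z^2,\mu^2)$ shows that $\nabla\mathcal G_\varepsilon$ is $L_\varepsilon$-Lipschitz in the joint variable, and the stated estimate then follows from the elementary bound $\|(z^1-z^2,\mu^1-\mu^2)\|\le\|z^1-z^2\|+\|\mu^1-\mu^2\|$.

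The hard part is securing the uniform boundedness of the Hessian, and this is precisely where the hypothesis $\varepsilon>0$ enters: the lower bound $s\ge\varepsilon^2$ keeps us away from the singularity of $\log$ and of its derivatives. If $\varepsilon=0$, both the prefactor $\tfrac{I}{s}$ and the curvature term $\tfrac{I}{s^2}|w|^2$ diverge as $(z,\mu)\to 0$, so no finite Lipschitz constant can exist; the smoothing parameter is therefore not merely a technical convenience but essential to the Lipschitz property asserted here. This is also consistent with the role of $\varepsilon>0$ in Theorem \ref{thm1}, whose convergence argument relies on this lemma.
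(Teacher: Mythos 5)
Your proof is correct. For this lemma the paper offers no argument of its own---its ``proof'' is a single sentence deferring to \cite{chang2018Blind} with all details omitted---so your write-up supplies exactly what was left out, and it does so by the natural route: separability of $\mathcal G_\varepsilon$ into scalar summands, a uniform Hessian bound enabled by $s:=|z|^2+\mu^2+\varepsilon^2\ge\varepsilon^2$, and the mean value inequality. Your computations check out: $\nabla h=c\big(1-\tfrac{I+\varepsilon^2}{s}\big)w$, $\nabla^2 h=c\big(1-\tfrac{I+\varepsilon^2}{s}\big)\mathrm{Id}_3+\tfrac{2c(I+\varepsilon^2)}{s^2}\,ww^{T}$, and the bounds $\big|1-\tfrac{I+\varepsilon^2}{s}\big|\le 1+\tfrac{I+\varepsilon^2}{\varepsilon^2}$ and $\tfrac{2(I+\varepsilon^2)}{s^2}|w|^2\le\tfrac{2(I+\varepsilon^2)}{\varepsilon^2}$ are valid, giving a per-summand constant $c\big(1+\tfrac{3(I+\varepsilon^2)}{\varepsilon^2}\big)$ independent of $(z,\mu)$. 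One step you should state explicitly: passing from per-summand bounds to $\|\nabla^2\mathcal G_\varepsilon\|\le L_\varepsilon$ uses the fact that distinct summands depend on \emph{disjoint} coordinates $(z_j(t),\mu_j(t))$, so the full Hessian is block diagonal and its operator norm is the \emph{maximum} of the block norms; ``adding up the finitely many contributions,'' as you first phrase it, would also yield a finite (but larger) constant, so this is a matter of precision rather than a gap. Likewise, since $\mathcal G_\varepsilon$ is real-valued on a complex domain, it is worth fixing once and for all the identification of $z_j(t)$ with $(\Re z_j(t),\Im z_j(t))$ so that $\nabla\mathcal G_\varepsilon$ means the real gradient under that identification, which is how the paper uses it in the stationarity equations. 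Finally, your closing observation---that for $\varepsilon=0$ the gradient diverges as $(z,\mu)\to 0$ whenever $I>0$, so no Lipschitz constant can exist---is correct and accurately explains why Theorem~\ref{thm1} requires $\varepsilon>0$.
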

\begin{proof}
It can be readily proved following \cite{chang2018Blind}, and we omit the details here.
\end{proof}

\begin{lem}
\label{lemDes}
Denoting $Y^k:=(\omega^k,u^k,z^k,\mu^k,\tilde \phi^k,\Lambda_1^k,\Lambda_2^k)$,
if $r>4L_\varepsilon$,
\begin{equation}
\begin{split}
\mathcal L(Y^k)-\mathcal L(Y^{k+1})&\geq  c_{\varepsilon, r} \|Y^{k+1}-Y^k\|^2,
\end{split}
\end{equation}
where $c_{\varepsilon, r}$ is a positive parameter independent from $\{Y^k\}$.
\end{lem}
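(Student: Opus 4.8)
The plan is to establish this sufficient‑decrease estimate as a block‑wise telescoping bound: each primal minimization buys a decrease of $\mathcal L$ proportional to the squared increment of the updated block, while each dual‑ascent step costs an increase that can be absorbed once $r$ is large. First I would decompose the one‑step change along the Gauss–Seidel order of the updates,
\[
\mathcal L(Y^k)-\mathcal L(Y^{k+1})
=\underbrace{[\,\cdots\,]}_{\omega}
+\underbrace{[\,\cdots\,]}_{u}
+\underbrace{[\,\cdots\,]}_{(z,\mu)}
+\underbrace{[\,\cdots\,]}_{\tilde\phi}
+\underbrace{[\,\cdots\,]}_{\Lambda_1,\Lambda_2},
\]
where each bracket is the difference of $\mathcal L$ before and after changing only the indicated variable.

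For the primal brackets I would use strong convexity of the corresponding subproblems. The $\omega$‑ and $u$‑subproblems carry the proximal terms $\tfrac{\alpha_1}{2}\|\cdot-\omega^k\|^2$ and $\tfrac{\alpha_2}{2}\|\cdot-u^k\|^2$, and since $\omega^{k+1},u^{k+1}$ are their exact minimizers the associated decreases are bounded below by $\tfrac{\alpha_1}{2}\|\omega^{k+1}-\omega^k\|^2$ and $\tfrac{\alpha_2}{2}\|u^{k+1}-u^k\|^2$. For the joint $(z,\mu)$‑block the penalty terms $\tfrac{r}{2}\sum_j\|z_j-\mathcal A_j+\Lambda_{1,j}\|^2+\tfrac{r}{2}\sum_j\|\mu_j-\tilde\phi+\Lambda_{2,j}\|^2$ render the objective $(r-L_\varepsilon)$‑strongly convex as soon as $r>L_\varepsilon$, because $\nabla\mathcal G_\varepsilon$ is $L_\varepsilon$‑Lipschitz by Lemma~\ref{lem0} (so $\mathcal G_\varepsilon$ has curvature at least $-L_\varepsilon$); hence that decrease is at least $\tfrac{r-L_\varepsilon}{2}\big(\|z^{k+1}-z^k\|^2+\|\mu^{k+1}-\mu^k\|^2\big)$. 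The $\tilde\phi$‑subproblem is an $rJ$‑strongly convex quadratic and contributes a nonnegative decrease.

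The delicate part is the dual step. A direct computation using the identities $\Lambda_1^{k+1}-\Lambda_1^k=z^{k+1}-\mathcal A(\omega^{k+1},u^{k+1})$ and the analogue for $\Lambda_2$ shows the dual ascent \emph{increases} $\mathcal L$ by exactly $r\|\Lambda_1^{k+1}-\Lambda_1^k\|^2+r\|\Lambda_2^{k+1}-\Lambda_2^k\|^2$. To absorb this I would invoke the stationarity identity in Eq.~\eqref{eqStationary}, namely $r(\Lambda_1^{k+1},\Lambda_2^{k+1})=-\nabla\mathcal G_\varepsilon(z^{k+1},\mu^{k+1})$, which holds at every iteration; differencing it at $k+1$ and $k$ and applying the Lipschitz bound of Lemma~\ref{lem0} yields
\[
\|\Lambda_1^{k+1}-\Lambda_1^k\|^2+\|\Lambda_2^{k+1}-\Lambda_2^k\|^2
\le \tfrac{2L_\varepsilon^2}{r^2}\big(\|z^{k+1}-z^k\|^2+\|\mu^{k+1}-\mu^k\|^2\big),
\]
so the dual increase is at most $\tfrac{2L_\varepsilon^2}{r}\big(\|z^{k+1}-z^k\|^2+\|\mu^{k+1}-\mu^k\|^2\big)$.

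Collecting terms, the net coefficient of $\|z^{k+1}-z^k\|^2+\|\mu^{k+1}-\mu^k\|^2$ is $\tfrac{r-L_\varepsilon}{2}-\tfrac{2L_\varepsilon^2}{r}$, which already equals $L_\varepsilon>0$ at $r=4L_\varepsilon$ and is positive for all larger $r$; thus $r>4L_\varepsilon$ suffices, matching the hypothesis. Setting $c_{\varepsilon,r}:=\min\big\{\tfrac{\alpha_1}{2},\tfrac{\alpha_2}{2},\tfrac{rJ}{2},\tfrac{r-L_\varepsilon}{2}-\tfrac{2L_\varepsilon^2}{r}\big\}>0$ then gives the claim, with $c_{\varepsilon,r}$ manifestly independent of $\{Y^k\}$. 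The main obstacle I anticipate is the dual‑step bookkeeping: since the $(z,\mu)$‑update is carried out with the stale $\tilde\phi^k$ while $\Lambda_2$ is updated using $\tilde\phi^{k+1}$ (and the algorithm has been simplified via $\tilde\phi^{k+1}=\tfrac1J\sum_j\mu_j^{k+1}$ with $\sum_j\Lambda_{2,j}^{k+1}=0$), one must verify that the stationarity relation still cleanly links $\Lambda_2^{k+1}$ to $\nabla_\mu\mathcal G_\varepsilon$; checking this coupling, together with the differentiability of $\nabla\mathcal G_\varepsilon$ guaranteed by the strict positivity in Eq.~\eqref{eqPos}, is where the care is required.
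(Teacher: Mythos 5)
Your proof is correct and follows essentially the same route as the paper's: the same Gauss--Seidel block decomposition, proximal/strong-convexity decreases for the $\omega$, $u$, $(z,\mu)$ and $\tilde\phi$ blocks, and absorption of the dual-ascent increase $r\|\Lambda^{k+1}-\Lambda^k\|^2$ via the stationarity identity $r(\Lambda_1^{k+1},\Lambda_2^{k+1})=-\nabla\mathcal G_\varepsilon(z^{k+1},\mu^{k+1})$ combined with the Lipschitz bound of Lemma~\ref{lem0}, leading to the same threshold $r>4L_\varepsilon$ and the same final $\min$-of-constants expression for $c_{\varepsilon,r}$. The only differences are cosmetic (the paper states $\tfrac{r-3L_\varepsilon}{2}$ for the $(z,\mu)$-block where you obtain $\tfrac{r-L_\varepsilon}{2}$, and it retains the explicit $\tfrac{rJ}{2}\|\tilde\phi^{k+1}-\tilde\phi^k\|^2$ term), and the bookkeeping subtlety you flag about $\Lambda_2$ being updated with $\tilde\phi^{k+1}$ while $\mu^{k+1}$ was computed with $\tilde\phi^k$ is indeed present in the paper as well, which glosses over it in the same way.
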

\begin{proof}
Regarding the $\omega$ and $u-$subproblems:
\begin{equation}\label{eqEnergyU}
\begin{split}
&\qquad\mathcal L(Y^k)-\mathcal L(\omega^{k+1},u^{k+1},z^k,\mu^k,\tilde \phi^k,\Lambda_1^k,\Lambda_2^k)\\
&\geq\tfrac{1}{2}\|\mathcal A(\omega^{k+1}-\omega^k, u^k)\|^2+\tfrac{\alpha_1}{2}\|\omega^{k+1}-\omega^k\|^2\\
&+\tfrac{1}{2}\|\mathcal A(\omega^{k+1}, u^{k+1}-u^k)\|^2+\tfrac{\alpha_2}{2}\|u^{k+1}-u^k\|^2.\\
\end{split}
\end{equation}

For the $z$ and $\mu$ subproblems, by Eq. \eqref{eqSubOptZ} and the Lemma \ref{lem0}, we have:
\begin{equation}\label{eqEnergyZM}
\begin{split}
&\mathcal L(\omega^{k+1},u^{k+1},z^k,\mu^k,\tilde \phi^k,\Lambda_1^k,\Lambda_2^k)-\mathcal L(\omega^{k+1},u^{k+1},z^{k+1},\mu^{k+1},\tilde \phi^k,\Lambda_1^k,\Lambda_2^k)\\
&\geq \tfrac{r-3L_\varepsilon}{2}(\|z^{k+1}_j-z_j^k\|^2+\|\mu^{k+1}_j-\mu_j^k\|^2).
\end{split}
\end{equation}

For the subproblems of $\tilde \phi,$ we have:
\begin{equation}\label{eqEnergyP}
\begin{split}
&\mathcal L(\omega^{k+1},u^{k+1},z^{k+1},\mu^{k+1},\tilde \phi^k,\Lambda_1^k,\Lambda_2^k)-\mathcal L(\omega^{k+1},u^{k+1},z^{k+1},\mu^{k+1},\tilde \phi^{k+1},\Lambda_1^k,\Lambda_2^k)\\
&\geq \tfrac{r J}{2}\|\tilde\phi^{k+1}-\tilde\phi^k\|^2.
\end{split}
\end{equation}

Further we have:
\begin{equation}
\label{eqEnergyMul}
\begin{split}
&\mathcal L(\omega^{k+1},u^{k+1},z^{k+1},\mu^{k+1},\tilde \phi^{k+1},\Lambda_1^k,\Lambda_2^k)-\mathcal L(Y^{k+1})\\
&\stackrel{\text{Eq.} \eqref{eqUpMul}}{=}-r\textstyle\sum_j (\|\Lambda_{1,j}^{k+1}-\Lambda_{1,j}^{k}\|^2+\|\Lambda_{2,j}^{k+1}-\Lambda_{2,j}^{k}\|^2)\\
&\stackrel{\text{Eq.} \eqref{eqStationary}}{\geq} -\tfrac{2L^2_\varepsilon}{r}(\|z^{k+1}-z^k\|^2+\|\omega^{k+1}-\omega^k\|^2),
\end{split}
\end{equation}
where the last relation is derived by Lemma \ref{lem0}.

Finally, by summing up Eqs. \eqref{eqEnergyU}-\eqref{eqEnergyMul} together, we get:
\begin{equation}
\begin{split}
\mathcal L(Y^k)-\mathcal L(Y^{k+1})\geq \min\left\{\tfrac{\alpha_1}{2},\tfrac{\alpha_2}{2},\tfrac{r-3L_\varepsilon}{2}-\tfrac{2L^2_\varepsilon}{r},\tfrac{rJ}{2} \right\}\|Y^{k+1}-Y^k\|^2.
\end{split}
\end{equation}
Setting $r>4L_\varepsilon$ concludes this lemma.
\end{proof}
\vskip .2in


\begin{proof}[Proof of Theorem \ref{thm1}]

We first show that $\{\mathcal L(Y^k)\}$ is lower bounded:
\begin{equation}
\begin{split}
&\mathcal L(Y^{k+1})=\mathcal G_\varepsilon(z^{k+1},\mu^{k+1})-\Re\langle \nabla_z \mathcal G_{\varepsilon,j} (z_j^{k+1},\mu_j^{k+1}),z^{k+1}_j-\mathcal A_j(\omega^{k+1},u^{k+1})\rangle\\
&\qquad\qquad-\langle \nabla_\mu \mathcal G_{\varepsilon,j}(z_j^{k+1},\mu_j^{k+1}) ,\mu^{k+1}_j-\tilde \phi^{k+1}\rangle+\tfrac{r}{2}\|z^{k+1}_j-\mathcal A_j(\omega^{k+1},u^{k+1})\|^2\\
&\qquad\qquad+\tfrac{r}{2}\|\mu^{k+1}_j-\tilde \phi^{k+1}\|^2\\
&=\mathcal G_\varepsilon(z^{k+1},\mu^{k+1})-\mathcal G_\varepsilon(\mathcal A_j(\omega^{k+1},u^{k+1}),\tilde\phi^{k+1})\\
&-\Re\langle \nabla \mathcal G_{\varepsilon,j} (z_j^{k+1},\mu_j^{k+1}),( (z^{k+1}_j-\mathcal A_j(\omega^{k+1},u^{k+1}))^T, (\mu^{k+1}_j-\tilde \phi^{k+1})^T)^T\rangle\\
&+\tfrac{r}{2}\|z^{k+1}_j-\mathcal A_j(\omega^{k+1},u^{k+1})\|^2+\tfrac{r}{2}\|\mu^{k+1}_j-\tilde \phi^{k+1}\|^2+\mathcal G_\varepsilon(\mathcal A_j(\omega^{k+1},u^{k+1}),\tilde\phi^{k+1})\\
&\geq \tfrac{r-L_\varepsilon}{2}\big(\|z^{k+1}_j-\mathcal A_j(\omega^{k+1},u^{k+1})\|^2+\|\mu^{k+1}_j-\tilde \phi^{k+1}\|^2\big)+\mathcal G_\varepsilon(\mathcal A_j(\omega^{k+1},u^{k+1}),\tilde\phi^{k+1}),
\end{split}
\end{equation}
where the last relation is derived from the Lemma \ref{lem0}.
Therefore, $\mathcal L(Y^k)>-\infty$ with $r\geq 4 L_\varepsilon.$
Furthermore, by the Lemma \ref{lemDes}:
\begin{equation}
\begin{split}
&+\infty> \mathcal L(Y^0)-\mathcal L(Y^k)=\sum_{l=0}^k(\mathcal L(Y^l)-\mathcal L(Y^{l+1}))\geq  c_{\varepsilon, r} \sum_{l=0}^k\|Y^{l+1}-Y^l\|^2.\\
\end{split}
\end{equation}
Accordingly we get:
\begin{equation}
\label{eqSucse}
\begin{split}
&\lim_{l\rightarrow +\infty} \|Y^{l+1}-Y^l\|=0.\\
\end{split}
\end{equation}

For any limit point  $Y^\star:=(\omega^\star, u^\star, z^\star,\mu^\star,\tilde \phi^\star,\Lambda^\star_1,\Lambda^\star_2)$ of the iterative sequence $\{Y^k\}$, there exists a subsequence $Y^{n_k}:=(\omega^{n_k}, u^{n_k}, z^{n_k},\mu^{n_k},\tilde \phi^{n_k},\Lambda^{n_k}_1,\Lambda^{n_k}_2)\subset Y^k,$
\text{such that}
\begin{equation}
\lim_{k\rightarrow+\infty} Y^{n_k}=Y^\star.
\end{equation}
We can see that $\{Y^{n_k}\}$ is bounded such that one can derive:
$\lim_{k\rightarrow+\infty}\mathcal A(\omega^{n_k},u^{n_k})=\mathcal A(\omega^\star,u^\star),$
and
$\lim_{k\rightarrow+\infty}\textstyle\sum\nolimits_j \mathcal S_j^T |\omega^{n_k}|^2\circ u^{n_k}=\textstyle\sum\nolimits_j \mathcal S_j^T |\omega^{\star}|^2\circ u^{\star}.
$ 
By Eq. \eqref{eqSucse}, $\{Y^{n_k-1}\}$ is bounded as well.  Hence
\begin{equation}
\lim_{k\rightarrow +\infty}\omega^{n_k}\circ \textstyle\sum_j|\mathcal S_j u^{n_k-1}|^2=\omega^{\star}\circ \textstyle\sum_j|\mathcal S_j u^{\star}|^2,~~
\end{equation}

\begin{equation}
\lim_{k\rightarrow+\infty}\textstyle\sum_j(\mathcal S_j u^{n_k-1})^*\circ \mathcal F^*(z_j^{n_k-1}+\Lambda_{1,j}^{n_k-1})=\textstyle\sum_j(\mathcal S_j u^{\star})^*\circ \mathcal F^*(z_j^{\star}+\Lambda_{1,j}^{\star}),
\end{equation}
and
\begin{equation}
\lim_{k\rightarrow+\infty}\textstyle\sum\nolimits_j \mathcal S_j^T ((\omega^{n_k})^*\circ\mathcal F^{*}(z_j^{n_k-1}+\Lambda^{n_k-1}_{1,j}))=\textstyle\sum\nolimits_j \mathcal S_j^T ((\omega^{\star})^*\circ\mathcal F^{*}(z_j^{\star}+\Lambda^{\star}_{1,j})).
\end{equation}

By Lemma \ref{lem0}:
\begin{equation}
\lim_{k\rightarrow+\infty}\nabla\mathcal G_\varepsilon(z^{n_k},\mu^{n_k})=\nabla\mathcal G_\varepsilon(z^{\star},\mu^{\star}).
\end{equation}

Finally, since
\begin{equation}
\label{eqStationary-subset}
\begin{split}
&0=\omega^{n_k}\circ \textstyle\sum_j|\mathcal S_j u^{n_k-1}|^2-\textstyle\sum_j(\mathcal S_j u^{n_k-1})^*\circ \mathcal F^*(z_j^{n_k-1}+\Lambda_{1,j}^{n_k-1})+\alpha_1(\omega^{n_k}-\omega^{n_k-1});\\
&0=\textstyle\sum\nolimits_j \mathcal S_j^T |\omega^{n_k}|^2\circ u^{n_k}-\textstyle\sum\nolimits_j \mathcal S_j^T ((\omega^{n_k})^*\circ\mathcal F^{*}(z_j^{n_k-1}+\Lambda^{n_k-1}_{1,j}))+\alpha_2(u^{n_k}-u^{n_k-1});\\
&0=\nabla \mathcal G_\varepsilon(z^{n_k},\mu^{n_k})+r((\Lambda^{n_k}_1)^T,(\Lambda^{n_k}_2)^T)^T;\\
&0=-\Lambda^{n_k}_{1,j}+\Lambda^{n_k-1}_{1,j}+z^{n_k}_j-\mathcal A_j(\omega^{n_k},u^{n_k});\\
&0=-\Lambda^{n_k}_{2,j}+\Lambda^{n_k-1}_{2,j}+\mu^{n_k}_j-\tfrac{1}{J}\textstyle\sum\nolimits_j \mu^{n_k}_j,
\end{split}
\end{equation}
following the above calculations of these limits, we can get:
\begin{equation}
\left\{
\begin{aligned}
&0=\omega^\star\circ \textstyle\sum_j|\mathcal S_j u^\star|^2-\textstyle\sum_j(\mathcal S_j u^\star)^*\circ \mathcal F^*(z_j^\star+\Lambda_{1,j}^*);\\
&0=\textstyle\sum\nolimits_j \mathcal S_j^T |\omega^\star|^2\circ u^{\star}-\textstyle\sum\nolimits_j \mathcal S_j^T((\omega^\star)^*\circ\mathcal F^{*}(z_j^{\star}+\Lambda^{\star}_{1,j}));\\
&0=\nabla \mathcal G_\varepsilon(z^{\star},\mu^{\star})+r((\Lambda^{\star}_1)^T,(\Lambda^{\star}_2)^T)^T;\\
&0=z^{\star}_j-\mathcal A_j(\omega^\star,u^{\star});\\
&0=\mu^{\star}_j-\tfrac{1}{J}\textstyle\sum\nolimits_j\mu^{\star}_j,
\end{aligned}
\right.
\end{equation}
which immediately implies that  $Y^\star$ is the stationary point of Eq. \eqref{eqGenModel-II} and concludes this theorem.
\end{proof}

\bibliography{rD}

\end{document}